\newcommand{\specialcell}[2][c]{\begin{tabular}[#1]{@{}c@{}}#2\end{tabular}}
\newcommand{\specialcellleft}[2][l]{\begin{tabular}[#1]{@{}l@{}}#2\end{tabular}}
\newcommand{\removelatexerror}{\let\@latex@error\@gobble}
\newtheorem{theorem}{Theorem}
\newtheorem{definition}{Definition}
\newtheorem{insight}{Insight}
\newtheorem{assumption}{Assumption}
\newcommand{\xmark}{\ding{55}}
\definecolor{dkgreen}{rgb}{0,0.6,0}
\definecolor{mauve}{rgb}{0.58,0,0.82}
\newcommand{\ignore}[1]{}
\DeclareMathOperator{\Proj}{Proj}
\DeclareMathOperator*{\argmax}{arg\,max}
\DeclareMathOperator{\softmax}{softmax}
\begin{document}

\title{Adversarial Deep Ensemble: Evasion Attacks and Defenses for Malware Detection}

\author{Deqiang~Li and Qianmu~Li

\thanks{D. Li and Q. Li are with the School of Computer Science and Engineering,
	Nanjing University of Science and Technology, Nanjing 210094, China {\tt e-mail: \{lideqiang,qianmu\}@njust.edu.cn}.
	}
\thanks{More information can be found at \url{http://ieeexplore.ieee.org}.}
}

\markboth{Transactions on Information Forensics and Security}%
{D. Li and Q. Li: Adversarial Deep Ensemble: Evasion Attacks and Defenses for Malware Detection}

\IEEEpubid{1556-6013 © 2020 IEEE. Personal use is permitted, but republication/redistribution requires IEEE permission.}
\maketitle

\begin{abstract}
Malware remains a big threat to cyber security, calling for machine learning based malware detection. While promising, such detectors are known to be vulnerable to evasion attacks. Ensemble learning typically facilitates countermeasures, while attackers can leverage this technique to improve attack effectiveness as well. This motivates us to investigate which kind of robustness the ensemble defense or effectiveness the ensemble attack can achieve, particularly when they combat with each other. We thus propose a new attack approach, named mixture of attacks, by rendering attackers capable of multiple generative methods and multiple manipulation sets, to perturb a malware example without ruining its malicious functionality. This naturally leads to a new instantiation of adversarial training, which is further geared to enhancing the ensemble of deep neural networks. We evaluate defenses using Android malware detectors against 26 different attacks upon two practical datasets. Experimental results show that the new adversarial training significantly enhances the robustness of deep neural networks against a wide range of attacks, ensemble methods promote the robustness when base classifiers are robust enough, and yet ensemble attacks can evade the enhanced malware detectors effectively, even notably downgrading the VirusTotal service.
\end{abstract}

\begin{IEEEkeywords}
Adversarial Machine Learning, Deep Neural Networks, Ensemble, Adversarial Malware Detection.
\end{IEEEkeywords}

\IEEEpeerreviewmaketitle

\section{Introduction}

\IEEEPARstart{M}{alware} gains due attention from communities while still being a big threat to cyber security. For example, Symantec reported that 246,002,762 new malware variants emerged in 2018 \cite{symantec:techrepstdsub}. Kaspersky detected 5,321,142 malicious Android packages in 2018 \cite{kasparsky:Online}. Worse yet, there is an increasing number of malware variants that attempted to undermine anti-virus tools and indeed evaded many malware detection systems \cite{cisco:Online}. 

In order to relieve the severe situation, communities have restored to machine learning techniques \cite{DBLP:journals/csur/YeLAI17,DBLP:conf/kdd/FanHZYA18}. While obtaining impressive performance, the learning-based models can be evaded by adversarial examples (see, for example, \cite{Biggio:Evasion,al2018adversarial,DBLP:conf/ijcai/HouYSA18,DBLP:conf/eisic/ChenYB17}). Interestingly, a few manipulations are enough to perturb a malware example into an adversarial one, by which the perturbed malware example is detected as benign rather than malicious \cite{kreuk2018deceiving,grossePM0M16}. This facilitates the research in the context of Adversarial Malware Detection (AMD), catering robust malware detectors against attacks.

Researchers have proposed enhancing the robustness of classifiers using ensemble methods such as adversarial training of ensemble \cite{grefenstette2018strength} or ensemble adversarial training \cite{tramer2017ensemble}. For the counterpart, attackers can leverage ensemble methods to promotes attack effectiveness as well such as by evading several classifiers \cite{liu_2016,article_Kwon} or by waging multiple attacks simultaneously \cite{tramer2019adversarial,araujo2019robust}. 
This naturally raises the question -- how is the effectiveness of ensemble attack or the robustness of ensemble defense when they combat with each other. 

\noindent{\bf Our contributions}. In this paper, we make the following contributions. First, we propose a new attack approach, named {\em mixture of attacks}, which enables attackers to leverage multiple generative methods and multiple manipulation sets  to produce adversarial malware examples. To realize it, we adapt ``max'' attack \cite{tramer2019adversarial}, into the AMD context, and further propose iterating ``max'' attack in a greedy manner, so as to boost attack effectiveness. In addition, we adapt {\em salt and pepper noises attack} and {\em pointwise attack} \cite{schott2018towards}, both of which are gradient-free, aiming to wage effective attacks when gradients of loss function suffer from certain issues \cite{DBLP:journals/corr/abs-1802-00420}. 

Second, we instantiate the {\em adversarial training} \cite{al2018adversarial} using a mixture of attacks and propose applying a manipulation set with the cardinality as large as possible. Further, we utilize this instantiation to harden the ensemble of deep neural networks, along with a theoretical analysis.

\IEEEpubidadjcol 
Third, we validate the robustness of malware detectors against 26 evasion attacks, which are categorized into five approaches: gradient-based, gradient-free, obfuscation, mixture of attacks, and transfer attack. Specifically, there are 10 gradient-based attacks: Projected Gradient Descent (PGD)-$\ell_1$ \cite{li2020enhancing}, PGD-$\ell_2$ \cite{li2020enhancing}, PGD-$\ell_\infty$ \cite{madry2017towards}, Grosse \cite{grosse2017adversarial}, bit gradient ascent \cite{al2018adversarial}, bit coordinate ascent \cite{al2018adversarial}, PGD-Adam \cite{li2019enhancing}, Gradient Descent with Kernel Density Estimation (GDKDE) \cite{Biggio:Evasion}, Fast Gradient Sign Method (FGSM) \cite{goodfellow6572explaining}, and jacobian-based saliency map attack~\cite{8782574}; 4 gradient-free attacks: 2 Mimicry attacks, salt and pepper noises, and pointwise; 5 obfuscation attacks: {\em Java} reflection, {\em string} encryption, {\em variable} renaming, junk code injection, and all four techniques above combined; 3 mixtures of attacks: ``max'' PGDs, iterative ``max'' PGDs, and iterative ``max'' PGDs+GDKDE, where PGDs means the mixture contains PGD-$\ell_1$, PGD-$\ell_2$, PGD-$\ell_\infty$, and PGD-Adam; 4 transfer attacks. We implement 6 Android malware detectors, including Basic DNN with no efforts to harden the model, 3 hardened DNNs incorporating adversarial training with attack rFGSM (dubbed AT-rFGSM)\cite{al2018adversarial}, PGD-Adam (dubbed AT-Adam) \cite{li2019enhancing}, and ``max'' PGDs, and 2 adversarial deep ensembles (i.e., the hardened ensemble of deep neural networks incorporating adversarial training). We conduct systematical experiments on Drebin \cite{Daniel:NDSS} and Androzoo \cite{Allix:2016:ACM:2901739.2903508} datasets, centering at the aforementioned question. Our findings are that:
\IEEEpubidadjcol 
\begin{itemize}
	\item The hardened models incorporating ``max'' PGDs can detect more malware examples than the Basic DNN, AT-rFGSM, and AT-Adam at the cost of lowering detection accuracy on benign examples, and thus F1 score in the absence of attacks. Adversarial deep ensemble cannot serve as a remedy, and even reduces the detection accuracy on both malicious and benign examples.
	
	\item The hardened models incorporating ``max'' PGDs significantly outperform the Basic DNN, AT-rFGSM, and AT-Adam against evasion attacks. However, all models cannot defeat two types of attacks: attack mimicking benign examples (e.g., GDKDE and Mimicry), and mixture of attacks (e.g., iterative ``max'' PGDs).
	
	\item The ensemble promotes robustness against attacks when base classifiers are robust enough. The usefulness of ensemble, however, is un-deterministic when base models are vulnerable to attacks. Interestingly, the experimental results confirm our theoretical analysis.
	
	\item VirusTotal service \cite{VirusTotal:Online} notably suffers from the iterative ``max'' PGDs+GDKDE attack. This is important because it shows adversarial evasion attacks may be a practical threat to cyber security. 
	
	\item When attributing the important features for adversarial deep ensemble, we observe that sub-effective features (e.g., {\small {\tt com.google.ads.AdActivity}}) are emphasized, thus leading to its robustness against attacks while trading off accuracy in non-adversarial settings. This implies the necessity of robust feature extraction.
\end{itemize}

Last but not the least, we make our codes publicly available at \url{https://github.com/deqangss/adv-dnn-ens-malware}.

\noindent\textbf{Paper outline.}
The rest of the paper is organized as follows. Section \ref{sec:related-work} reviews the related work. Section \ref{sec:preliminaries} presents the ensemble of deep neural networks, evasion attacks (including two attacks adapted into AMD first time), and adversarial training. Section \ref{sec:methodology} describes the mixture of attacks and adversarial deep ensemble. Section \ref{sec:experiments} evaluates attacks and defenses. Section \ref{sec:discussion} discusses certain issues we concern. Section \ref{sec:conclusion} concludes the paper and shows future research.

\section{Related Work} \label{sec:related-work}

We review ensemble methods from the attacker's and the defender's perspectives, respectively.

\subsection{Ensemble Attacks}

There are two ensemble-based approaches improving the effectiveness of adversarial examples: (i) by attacking multiple classifiers and (ii) by using multiple attack methods.

For type (i), Liu et al. \cite{liu_2016} suggest improving the transferability of adversarial examples by attacking an ensemble of deep learning models rather than a single one. This is because
adversarial examples that can fool multiple models tend to have strong transferability \cite{liu_2016,article_Kwon}. Dong et al. \cite{dong2018boosting} further investigate three manners of organizing the base models and demonstrate that the ensemble of averaging {\em logits} outperforms the others for boosting the attack effectiveness. 

For type (ii), Araujo et al. \cite{araujo2019robust} show the difference between $\ell_2$ and $\ell_\infty$ norm-based adversarial examples geometrically. Tram{\`e}r et al. \cite{tramer2019adversarial} further propose attacking a classifier using multiple types of manipulations (e.g., constrained by $\ell_1$ or $\ell_\infty$ norm). The empirical results show that the ``max'' attack lets the attacker evade the victim effectively.

In the context of AMD, one classical means is to perturb the discriminative features derived by {\em random forest} \cite{smutz_2012}. Recently, Al-Dujaili et al. \cite{al2018adversarial} demonstrate the difference between four attacks in a sense that deep neural network based malware detectors enhanced by training with one attack cannot resist the other attacks.

We apply the aforementioned two approaches together with accommodating: (i) the discrete input domain and (ii) the constraint of retaining malicious functionality. We exploit the ``max'' strategy by permitting attackers to have multiple generative methods and multiple manipulation sets, which is different from the study \cite{tramer2019adversarial} that focuses on the types of manipulations. 

\subsection{Ensemble Defenses}

Biggio et al. \cite{biggio2010multiple,Biggio2010} propose defending against evasion attacks using \emph{bagging} and \emph{random subspace} techniques, which can produce \emph{evenly-distributed} weights. One limitation is that the base classifier is restricted to the linear algorithm. For deep learning models, Abbasi et al. \cite{abbasi2017robustness} propose ``specialists + 1'' ensemble, and Xu et al. \cite{xu2017feature} combine multiple {\em feature squeezing} techniques, so as to detect adversarial examples effectively. Both defenses, however, are defeated by a following study \cite{He2017AdversarialED}, which demonstrates that the ensemble of weak defenses cannot mitigate evasion attacks. 
Tram{\`e}r et al. \cite{tramer2017ensemble} further introduce \emph{ensemble adversarial training}, which learns one robust model by augmenting training data with adversarial examples transferred from multiple models. Grefenstette et al. \cite{grefenstette2018strength} empirically demonstrate that adversarial training of ensemble achieves better robustness than the ensemble of adversarially trained multiple classifiers. Pang et al. \cite{pang2019improving} suggest enhancing the ensemble by diversifying base classifiers.

In the context of malware detection, Smutz and Stavrou \cite{smutz2016tree} propose leveraging ensemble classifier to detect adversarial examples that are treated as outliers. Stokes et al. \cite{article_stokes} show the resilience of ensemble to evasion attacks. Moreover, {\em stacking ensemble} is utilized to hinder adversarial malware examples \cite{Jugalstacking}. All these defenses neglect the adversarial training that is an effective means to resist evasion attacks.

As a comparison, we aim to circumvent a wide range of evasion attacks in the AMD context. We also enhance the ensemble model by adversarial training (i.e., adversarial training of ensemble) \cite{grefenstette2018strength}, while incorporating a mixture of attacks, along with a theoretical analysis of ensemble.

\section{Preliminaries} \label{sec:preliminaries}

We present the ensemble of deep neural networks, the evasion attacks, and the framework of adversarial training. 

\subsection{Ensemble of Deep Neural Networks}

Given a malware example $z$ in the file space $ \mathcal{Z}$, i.e., $z \in \mathcal{Z}$, mapped as a $d$ dimensional feature vector (i.e., feature representation) $\mathbf{x} \in \mathcal{X}$ by feature extraction method $\phi:\mathcal Z \to \mathcal{X}$, an ensemble classifier $f:\mathcal{X} \to \mathcal{Y}$ takes $\mathbf{x}$ as input and returns the predicted label $0$ or $1$, i.e., $\mathcal{Y}=\{0,1\}$, where `$0$' means benign file and `$1$' means malicious. We consider the deep ensemble that is a linear combination of $l$ Deep Neural Networks (DNNs) $\{\mathbf{F}_i\}_{i=1}^l$. Let $\mathbf{w}$ denote a weight vector $\mathbf{w}=(w_1,w_2,\cdots,w_l)$, satisfying $\mathbf{w}\in\mathcal{W}$ with $\mathcal{W}=\{\mathbf{w}:\mathbf{1}^\top\cdot\mathbf{w}=1, w_i\geq0\}$ for $i=1,\ldots,l$. For waging effective attacks, 
we apply {\em logit ensemble} \cite{dong2018boosting}, which is:
\begin{equation}
\overline{\mathbf{F}}(\mathbf{x}) = \softmax(\sum_{i=1}^l w_i\mathbf{Z}_i(\mathbf{x})),
~~\text{s.t.}~~\mathbf{w} \in \mathcal{W} 
\label{eq:logit-ensemble}
\end{equation}
where $\mathbf{Z}_i$ denotes the logits of $\mathbf{F}_i$. Further, a stable version would be implemented to transform $\mathbf{Z}_i$ to $\mathbf{Z}_i-\max(\mathbf{Z}_i)$, so as to avoid numerical overflow, where $\max$ returns the maximum element in a vector. We obtain the predicted label by $f(\mathbf{x})=\argmax_{j}\overline{\mathbf{F}}(\mathbf{x})$, where $\argmax$ returns the index of maximum element in a vector. 

The parameters of ensemble, collectively denoted by $\overline{\theta}$, are optimized by minimizing a cost over the joint feature vector and label space, namely:
\begin{equation}
\min_{\overline{\theta}}\mathbb{E}_{(\mathbf{x},y)\in\mathcal{X}\times\mathcal{Y}}\left[L({\mathbf{\overline{F}}(\mathbf{x})},y)\right], \label{eq:nor-obj}
\end{equation}
where $L:\mathbb{R}^{|\mathcal{Y}|}\times\mathcal{Y}$ is a loss function (e.g., cross entropy \cite{papernotMGJCS16}). For simplifying notations, we use $\mathbf{\overline{F}}(\cdot)$ (rather than $\mathbf{\overline{F}}(\overline{\theta};\cdot)$) to denote a parameterized ensemble of deep neural networks.

\subsection{Evasion Attacks} \label{sec:evasion}

\subsubsection{Definition}
We focus on evasion attacks in the context of AMD, which generally confines attackers by: perturbing malware examples in the test phase (rather than training phase) and retaining malicious functionality \cite{al2018adversarial,grossePM0M16,biggio2018wild}. The terminology of {\em adversarial malware example} and {\em adversarial example} are used interchangeably, referring to the perturbed example that can evade the victim successfully.

Two versions of evasion attacks are used, corresponding to the {\em file space} $\mathcal{Z}$ and the {\em feature space} $\mathcal{X}$, respectively. In the file space, the attacker perturbs the malware example $z$ into $z'$ in order to mislead the classifier $f$ \cite{316904628}. Formally, given malware example-label pair $(z,y)$ and manipulation set $\mathcal{M}_z$, the attacker intents to achieve:
\begin{align}
& z'=z\oplus\delta, \\
~~\text{s.t.}~~ (\delta\in\mathcal{M}_z) \land & (f(\phi(z')) \neq y) \land (f(\phi(z))=y)  \nonumber
\end{align}
where $\oplus$ refers to the operation of applying manipulations, and manipulations in $\mathcal{M}_z$ retain the functionality of $z$ (e.g, junk codes injection).

Instead of perturbing executable malware examples directly, an attacker could derive manipulations in the feature space $\mathcal{X}$, guiding the generation of adversarial examples in the file space \cite{rndic_laskov,demontis2018intriguing}. Let $\mathbf{M}_\mathbf{x}$ denote {\em representation} manipulations for $\mathbf{x}=\phi(z)$, which is derived by
\begin{equation}
\mathbf{M}_\mathbf{x}=\{(\phi(z')-\phi(z)):z'=z\oplus\delta\}~\text{for}~\forall~\delta\in\mathcal{M}_z. \label{eq:feature-manp}
\end{equation}
Formally, given a tuple $(\mathbf{x},y)$, manipulations in the feature space $\mathbf{M}_\mathbf{x}$, and the inverse feature extraction $\phi^{-1}$, the attacker intents to achieve:
\begin{align}
& z'=\phi^{-1}(\mathbf{x}') , \label{eq:non_tar}\\
~\text{s.t.}~
(\mathbf{x}'= \mathbf{x}+\delta_{\mathbf{x}}) \land (\delta_{\mathbf{x}}&\in\mathbf{M}_{\mathbf{x}}) \land (f(\mathbf{x}') \neq y)\land (f(\mathbf{x})=y) . \nonumber
\end{align}

Notably, {\em feature space evasion attack} necessitates following two assumptions.
Assumption \ref{assumption:sol} below says the inverse feature extraction $\phi^{-1}$ is solvable, and otherwise the corresponding adversarial example only exists in theory.
\noindent\begin{assumption}[solvability assumption \cite{rndic_laskov}] \label{assumption:sol}
	Given $z\in\mathcal{Z}$ and feature extraction $\phi$, the attacker can obtain $z'=\phi^{-1}(\mathbf{x}')$ when the representation $\mathbf{x}'$ is perturbed from $\mathbf{x}$ for $\mathbf{x}=\phi(z)$.
\end{assumption}

Eq.\eqref{eq:non_tar} shows $\phi^{-1}$ works upon the manipulation set $\mathbf{M}_{\mathbf{x}}$ which is derived by Eq.\eqref{eq:feature-manp}. It is a brute-force solution by calculating all perturbed files in advance, incurring efficiency issues. Researchers thus suggest alternatives \cite{rndic_laskov,demontis2018intriguing,7917369,pierazzi2019intriguing}. One example is predetermining $\mathbf{M}_{\mathbf{x}}$ empirically based on the file space manipulation set $\mathcal{M}_z$. This would rely on the below assumption that the perturbed representation is bounded by a box constraint \cite{demontis2018intriguing}.
Let $\check{\mathbf{u}}$ and $\hat{\mathbf{u}}$ respectively denote the lower and upper boundaries in the feature space (i.e., elements in $\check{\mathbf{u}}$ are not greater than corresponding elements in $\hat{\mathbf{u}}$). 
\begin{assumption}[manipulation assumption \cite{demontis2018intriguing}] \label{assumption:man}
	$\mathbf{x}'$ perturbed from $\mathbf{x}$ satisfies $\mathbf{x}' \in [\check{\mathbf{u}},\hat{\mathbf{u}}]$ for $\forall~ \mathbf{x}\in \mathcal{X}$.
\end{assumption}
With regard to Assumption \ref{assumption:man} (i.e., $\mathbf{M}_\mathbf{x}\subset[\check{\mathbf{u}} - \mathbf{x},\hat{\mathbf{u}}-\mathbf{x}]$), we cannot modify the file $z$ by following $\mathbf{x}'-\mathbf{x}$ particularly. One reason is features may be interdependent (i.e., modifying a representation value triggers other correlated ones changed), resulting in dependent manipulations \cite{rndic_laskov,pierazzi2019intriguing}. However, Assumption \ref{assumption:man} misses to capture this dependence. In this work, we implement $\phi^{-1}(\mathbf{x}')$ subject to retaining malicious functionality, which may break the intuition of $\phi(\phi^{-1}(\mathbf{x}'-\mathbf{x}))=\mathbf{x}'-\mathbf{x}$. Our preliminary experiments show that the attack effectiveness barely suffers from this side-effect for most of the used features are independent\footnote{This observation is application-special. To accommodate various types of feature extraction, we need to bridge the gap between the feature space attack and the file space attack, avoiding the use of empirical $\mathbf{M}_\mathbf{x}$. We leave this problem in further work.} (see Section \ref{sec:spec-mani}).

\subsubsection{Threat Models} We consider a threat model specified by attacker's capability of perturbing examples and attacker's knowledge of the target system \cite{Biggio:Evasion,rndic_laskov,7917369,biggio2018wild,biggio13-tkde}. 

\smallskip
\noindent
{\bf Attacker capability}. As aforementioned earlier, an attacker is capable of perturbing malware examples in the test phase with malicious functionality preservation. In addition, the attacker is permitted to generate an example that is miss-classified with high cost (e.g., maximizing classifier's loss) \cite{rndic_laskov,al2018adversarial}.

\smallskip
\noindent{\bf Attacker knowledge}. There are three attack scenarios: {\em white-box} vs. {\em black-box} vs. {\em  grey-box}. In the white-box scenario, the attacker knows everything of the targeted system, including defense mechanisms; in the black-box scenario, the attacker does not know internals of the victim, except for the predicted labels; the grey-box scenario resides in between, and we permit the attacker to know the dataset, the feature extraction method, and the learning algorithm, but not defense methods and learned parameters of the targeted malware detector.

\subsubsection{Attack Strategies in the Literature}
We consider a broad range of methods to specify evasion attacks, which are categorized into four approaches: gradient-based attacks, gradient-free attacks, obfuscations, and transfer attacks.

\smallskip
\noindent
\textbf{Gradient-based attacks}. This approach permits attackers to wage white-box attacks, deriving adversarial examples using gradients of classifier's loss function. For example, $\ell_p~(p=1,2,\infty)$ norm based Projected Gradient Descent (PGD) is an effective means to maximizing the classifier's loss appropriately \cite{madry2017towards}. Note that the perturbation $\delta_\mathbf{x}$ is continuous during optimization process when following the direction of gradients. We map the perturbed representation by looking for the feasible nearest neighbor \cite{li2020enhancing}. In addition, several related methods are adapted or proposed in the context of AMD, including Grosse \cite{grosse2017adversarial}, Fast Gradient Sign Method (FGSM) \cite{goodfellow6572explaining,al2018adversarial}, Jacobian-based Saliency Map Attack (JSMA)~\cite{papernot2016limitations,8782574}, Bit Gradient Ascent (BGA) \cite{al2018adversarial}, Bit Coordinate Ascent (BCA) \cite{al2018adversarial} and PGD-Adam \cite{li2019enhancing}, while noting that Grosse, JSMA, BGA, and BCA permit the feature addition only. Moreover, another method, named Gradient Descent with Kernel Density Estimation (GDKDE) \cite{Biggio:Evasion}, lifts perturbed examples into the populated region of benign ones.

\smallskip
\noindent\textbf{Gradient-free attacks}. This approach permits attackers to wage grey-box attacks.
The Mimicry is usually applied, known as perturbing a malware example to mimic the benign ones \cite{rndic_laskov,7917369}. We further adapt two methods (both are originally proposed in the context of image processing) to modify malware examples: {\em salt and pepper noises} and {\em pointwise} \cite{schott2018towards}. Algorithm \ref{alg:spna} shows the former one in the feature space, which perturbs representations using salt and pepper noises repetitively. Algorithm \ref{alg:pa_attack} describes the pointwise that, given an adversarial example, reduces its degree of manipulations while keeping its adversarial property.

\begin{algorithm}[t]
	\KwIn{The feature representation-label pair $(\mathbf{x},y)$; the classifier $f$; the manipulation set $\mathbf{M}_\mathbf{x}$; a scalar $0\leq\epsilon_{max}\leq1$; the number of scalars $N_s$; the number of repetitions $N_{rept}$.}
	\KwOut{The perturbed point $\mathbf{x}^\ast$.}
	
	Initialize $\mathbf{x}^\ast \leftarrow \mathbf{x}$;
	
	\Repeat{$N_{rept}$ is reached}{
		Produce evenly spaced scalars $\epsilon_1,\ldots,\epsilon_{N_s}$ over the range of $[0,\epsilon_{max}]$;
		
		\For{$j=1$ to $N_s$}{
			Generate salt and pepper noises $\delta_\mathbf{x}$ with the maximum degree of manipulation $\epsilon_j*d$ and $\delta_\mathbf{x}\in\mathbf{M}_\mathbf{x}$;
			
			Set $\mathbf{x}' \leftarrow \mathbf{x}+\delta_\mathbf{x}$;
			
			\uIf{$f(\mathbf{x}')\neq y$}{ 
				Set $\mathbf{x}^\ast\leftarrow\mathbf{x}'$ and $\epsilon_{max} \leftarrow \epsilon_j$;
				
				\Break;
			}
		}
	}
	
	\Return $\mathbf{x}^\ast$.
	
	\caption{Salt and pepper noises attack in the feature space.}
	\label{alg:spna}
\end{algorithm}

\begin{algorithm}[t]
	\KwIn{The feature representation-label pair $(\mathbf{x},y)$; the classifier $f$; the adversarial representation $\mathbf{x}^{\ast}=(x^\ast_1,\ldots,x^\ast_d)$; the perturbed representation set $\mathcal{X}_\mathbf{M}$.}
	\KwOut{The perturbed point $\check{\mathbf{x}}^\ast$.}
	
	\Repeat{$\check{\mathbf{x}}^\ast=\mathbf{x}^{\ast}$}{
		Set $\check{\mathbf{x}}^\ast\leftarrow\mathbf{x}^{\ast}$;
		
		Shuffle the list of indices $[d]=\langle 1,\cdots,d\rangle$ to another list $[(d)]=\langle (1),\cdots,(d)\rangle$ randomly;
		
		\For{$i=0$ to $d$}{
			\uIf{$x^\ast_{(i)} = x_{(i)}$}{
				\Continue ;
			}
			
			Modify $\mathbf{x}^\ast$ locally by setting $x^\ast_{(i)} \leftarrow x_{(i)}$;
			
			\uIf{$(\mathbf{x}^{\ast}\notin\mathcal{X}_\mathbf{M})\lor(f(\mathbf{x}^{\ast}) = y)$}{
				reset $x^\ast_{(i)} \leftarrow \check{x}^\ast_{(i)}$;
			}
		}
	}
	
	\Return $\check{\mathbf{x}}^\ast$.
	
	\caption{Pointwise attack in the feature space.}
	\label{alg:pa_attack}
\end{algorithm}

\smallskip
\noindent\textbf{Obfuscations}. This approach enables attackers to wage {\em file space evasion attacks} with knowing nothing about the victim model (i.e., zero-query black-box attack). Typically, software sample can be manipulated using certain techniques (e.g., variable renaming). Indeed, researchers have reported the obfuscated malware examples can bypass detection \cite{jung:avpass-bh,7917369}. This inspires us to investigate this attack approach.

\smallskip
\noindent\textbf{Transfer attacks}. This approach suggests attackers waging {\em transfer attack} when knowing certain knowledge of $f$, such as a portion of features \cite{rndic_laskov,demontis2018intriguing}. The procedure mainly has following steps: perform reverse-engineering to obtain a surrogate model which resembles the targeted model $f$; perturb malware examples against the surrogate model; target the classifier $f$ using the perturbed examples. 

\subsection{Minmax Adversarial Training} \label{sec:min-adv-training}

Adversarial training lets classifiers know certain attacks proactively by augmenting the training data with adversarial examples \cite{xu2014evasion,goodfellow6572explaining,DBLP:journals/corr/KurakinGB16a,DBLP:conf/codaspy/XuZXY13}. In particular, it has been proposed to consider adversarial training with the optimal attack, which in a sense corresponds to the worst-case scenario and therefore leads to classifiers that are robust against the non-optimal ones, namely the {\em minmax adversarial training} \cite{al2018adversarial}:

\begin{equation}
\min_{\theta}\mathbb{E}_{(\mathbf{x},y)\in\mathcal{X}\times\mathcal{Y}}\left(L(\mathbf{F}(\theta;\mathbf{x}), y) + \max_{\mathbf{x}'\in\mathcal{X}_\mathbf{M}}L(\mathbf{F}(\theta;\mathbf{x}'), y)\right), \label{eq:naive-minmax}
\end{equation}
where $\theta$ denotes the parameters of a DNN $\mathbf{F}$ and $\mathcal{X}_\mathbf{M}\subseteq[\check{\mathbf{u}},\hat{\mathbf{u}}]$ denotes a predetermined set of perturbed representations. It is worthy reminding that the inner maximization is intractable because of the non-convex DNN, resulting in local maxima.

\section{Methodology} \label{sec:methodology}

We elaborate the mixture of attacks and adversarial deep ensemble, of which the adversarial deep ensemble relies on a mixture of attacks. 

\subsection{Mixture of Attacks} \label{mix-attack}

We propose mixture of attacks, an ensemble based approach, permitting attackers to perturb a malware example via multiple attack methods and multiple manipulation sets. Though this setting gives attackers more freedom, it is practical in the context of AMD. For instance, researchers suggest perturbing Android Packages by adding instructions into the file of {\em AndroidManifest.xml} \cite{grossePM0M16}, while addition operation can be applied to the file of {\em classes.dex} as well \cite{8782574} and moreover, certain objects (e.g., {\em string}) can be hidden \cite{7917369}.

\subsubsection{Overall Idea}
 
With regard to Assumption \ref{assumption:sol} and \ref{assumption:man} (which are empirically handled in Section \ref{sec:experiments}), we consider feature space evasion attacks. Let $H$ denote the space of generative method and $\Delta_\mathbf{x}$ denote the space of manipulation set such that 
\begin{definition}[generative method]
	A generative method $h\in H$ takes as input the representation $\mathbf{x}$ with a constraint $\mathbf{M}_\mathbf{x}\in\Delta_\mathbf{x}$, and returns a perturbed representation $\mathbf{x}'=h(\mathbf{M}_\mathbf{x};\mathbf{x})$.
\end{definition}
We characterize the ``strength'' of an attack upon $h$ and $\mathbf{M}_\mathbf{x}$ via some scoring measurements, wherein the classifier loss $L(\overline{\mathbf{F}}(h(\mathbf{M}_\mathbf{x};\mathbf{x})), y)$ is leveraged for a given $(\mathbf{x},y)$ tuple. The higher loss value indicates a stronger attack.

The mixture of attacks has the same objective as the aforementioned approaches (e.g., gradient-based), aiming to maximize a score when manipulating malware examples. In contrast to attack strategies that design a generative method upon a manipulation set, a mixture of attacks attempts to construct $n$ generative methods $\{h^i\}_{i=1}^{n}$ and $m$ manipulation sets $\{\mathbf{M}^j_\mathbf{x}\}_{j=1}^m$, and then combine them to wage an attack, where $n\geq1$ and $m\geq1$. 	

\subsubsection{Two Attack Strategies}

In order to realize the mixture of attacks, we apply two straightforward strategies as follows:
\begin{itemize}
	\item ``Max'' strategy: Given a representation-label pair $(\mathbf{x},y)$, $n$ generative methods $\{h^i\}_{i=1}^{n}$, and $m$ manipulation sets $\{\mathbf{M}^j_\mathbf{x}\}_{j=1}^m$, the attacker attempts to choose a generative method, say $\widetilde{h}$, and a manipulation set, say $\widetilde{\mathbf{M}}_\mathbf{x}$, both of which joint to produce the optimal attack, namely that
	\begin{align}
		\widetilde{h};&\widetilde{\mathbf{M}}_\mathbf{x} = \argmax_{h;\mathbf{M}_\mathbf{x}} L(\overline{\mathbf{F}}(h(\mathbf{M}_\mathbf{x};\mathbf{x})), y) \label{eq:mix-attr} \\
		~~\text{s.t.,}~~ & h \in \{h^i\}_{i=1}^{n} \land \mathbf{M}_\mathbf{x}\in \{{\mathbf{M}^j_\mathbf{x}}\}_{j=1}^m \nonumber
	\end{align}
	
	\item Iterative ``max'' strategy: The attacker performs the ``max'' strategy iteratively with each round adding perturbations on the resulting example came from the last iteration (in the first iteration, perturbations are applied on $\mathbf{x}$).
\end{itemize}

Algorithm \ref{alg:iter_mix_attack} unifies and summarizes the two strategies. We calculate perturbed examples using two loops corresponding to generative methods $\{h^i\}_{i=1}^n$ and manipulation sets $\{\mathbf{M}_\mathbf{x}^j\}_{j=1}^m$ (line 3 - line 7). For line 8 and line 9, we select the optimal combination of attack method and manipulation set to perturb an example. Herein, parts of the algorithm (from line 3 to line 9) belongs to the ``max'' attack. The iterative case continues to proceed with taking as input the resulting point calculated by ``max'' attack, as shown in line 10. The procedure halts until the predetermined number of iteration is reached (line 2) or the convergence criterion is met (line 11, the change of score is less than a small scalar such as $\varepsilon=10^{-9}$). The iterative case improves the attack effectiveness greedily, resulting in more directions being explored. In the future, one may consider more strategies to improve the ``max'' attack. 

\begin{algorithm}[tb]
	\KwIn{The feature representation-label pair $(\mathbf{x},y)$; $n$ generation methods $\{h^i\}_{i=1}^n$; $m$ manipulation sets $\{\mathbf{M}_\mathbf{x}^j\}_{j=1}^m$; the score measurement $L$; the number of iterations $N$; a small constant $\varepsilon>0$.}
	\KwOut{The perturbed point $\mathbf{x}'$.}
	
	Initialize $\mathbf{x}'_0\leftarrow\mathbf{x}$;
	
	\For{$k=1$ to $N$}{
		
		\BlankLine
		\For{$i=1$ to $n$}{
			\For{$j=1$ to $m$}{
				Calculate $h^i(\mathbf{M}_\mathbf{x}^j;\mathbf{x}'_{k-1})$;
			}
			
		}
		
		Select $\widetilde{h}$ and $\widetilde{\mathbf{M}}_\mathbf{x}$ via Eq.\eqref{eq:mix-attr};
		
		Set $\mathbf{x}'\leftarrow \widetilde{h}(\widetilde{\mathbf{M}}_\mathbf{x};\mathbf{x})$;
		
		\BlankLine
		Set $\mathbf{x}'_k \leftarrow \mathbf{x}'$;
		
		\uIf{$|L(\mathbf{x}'_{k},y)-L(\mathbf{x}'_{k-1},y)|<\varepsilon$}{
			\Return $\mathbf{x}'$;
		}
	}
	\Return $\mathbf{x}'$.
	
	\caption{Iterative ``max'' attack in the feature space.}
	\label{alg:iter_mix_attack}
\end{algorithm}

\subsection{Adversarial Deep Ensemble} \label{sec:ade}

We enhance the robustness of deep ensemble by adversarial training technique incorporating the ``max'' attack. In the end, we instantiate the minmax training (see Eq.\ref{eq:naive-minmax}) as:
\begin{align}
\min_{\overline{\theta}}\mathbb{E}_{(\mathbf{x},y)\in\mathcal{X}\times\mathcal{Y}}\bigg(L(\overline{\mathbf{F}}(\mathbf{x}), y) 
& + \max_{h;\mathbf{M}_\mathbf{x}} L(\overline{\mathbf{F}}(h(\mathbf{M}_\mathbf{x};\mathbf{x})), y)\bigg), \label{eq:minmax-rw}\\
~~\text{s.t.}~~ h \in H~&\land~\mathbf{M}_\mathbf{x} \in \Delta_\mathbf{x}. \nonumber
\end{align} 
Because of the information barrier between the defender and the attacker, we shall use the notations $H$ and $\Delta_\mathbf{x}$ to replace attacker's empirical set $\{h^i\}_{i=1}^n$ and $\{\mathbf{M}_\mathbf{x}^j\}_{j=1}^{m}$, respectively. In contrast to the former study \cite{al2018adversarial}, there are three differences:
\begin{itemize}
	\item ${H}$ contains multiple approximate maximizers.
	
	\item $\Delta_\mathbf{x}$ contains a huge number of manipulation sets.
	
	\item $\overline{\mathbf{F}}$ is a deep ensemble (rather than a single DNN).
\end{itemize}

First, we may apply all possibly approximate maximizers to enhance the malware detectors, for the aim of defending against a wide range of attacks. Owing to the efficiency issue, we choose and use the gradient-based maximizers.

Second, let $\check{\mathbf{M}}_\mathbf{x}$ and $\hat{\mathbf{M}}_\mathbf{x}$ denote two manipulation sets. The theorem below suggests producing adversarial examples upon the union of all manipulation sets, namely $\mathbf{M}_\mathbf{x}=\bigcup\Delta_\mathbf{x}$.

\begin{theorem} \label{theorem:manp-space}
	Upon two manipulation sets $\check{\mathbf{M}}_\mathbf{x}$ and $ \hat{\mathbf{M}}_\mathbf{x}$, given a representation-label pair $(\mathbf{x},y)$, the generative method $h$ perturbs $\mathbf{x}$ by maximizing the classifier loss $L$, which leads to $L(\overline{\mathbf{F}}(h(\check{\mathbf{M}}_\mathbf{x};\mathbf{x})), y) \leq L(\overline{\mathbf{F}}(h(\hat{\mathbf{M}}_\mathbf{x};\mathbf{x})), y)$ when $\check{\mathbf{M}}_\mathbf{x} \subseteq \hat{\mathbf{M}}_\mathbf{x}$.
\end{theorem}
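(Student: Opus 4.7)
The theorem is a monotonicity statement: enlarging the feasible set of perturbations cannot decrease the loss achieved by a loss-maximizing generative method. So my plan is to treat it as a standard ``max over a larger set is at least max over a smaller set'' argument, with a small bit of care about what $h$ means.

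First I would unpack the hypothesis. The statement says that $h$ ``perturbs $\mathbf{x}$ by maximizing the classifier loss $L$,'' which I would formalize as
\begin{equation*}
h(\mathbf{M}_\mathbf{x};\mathbf{x}) \in \argmax_{\delta \in \mathbf{M}_\mathbf{x}} L\bigl(\overline{\mathbf{F}}(\mathbf{x}+\delta),\, y\bigr),
\end{equation*}
so that $h$ is a (possibly set-valued) selector of an optimal perturbation from the admissible manipulation set, consistent with the earlier definition of a generative method in the paper. Writing
\begin{equation*}
\check{\Phi} := \max_{\delta \in \check{\mathbf{M}}_\mathbf{x}} L\bigl(\overline{\mathbf{F}}(\mathbf{x}+\delta),\, y\bigr), \qquad \hat{\Phi} := \max_{\delta \in \hat{\mathbf{M}}_\mathbf{x}} L\bigl(\overline{\mathbf{F}}(\mathbf{x}+\delta),\, y\bigr),
\end{equation*}
the conclusion reduces to showing $\check{\Phi} \leq \hat{\Phi}$.

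Next I would invoke the inclusion $\check{\mathbf{M}}_\mathbf{x} \subseteq \hat{\mathbf{M}}_\mathbf{x}$ directly. Let $\check{\delta}$ be any maximizer attaining $\check{\Phi}$ on the smaller set. Since $\check{\delta} \in \check{\mathbf{M}}_\mathbf{x} \subseteq \hat{\mathbf{M}}_\mathbf{x}$, the point $\check{\delta}$ is feasible for the larger problem, and by definition of maximum,
\begin{equation*}
\check{\Phi} \;=\; L\bigl(\overline{\mathbf{F}}(\mathbf{x}+\check{\delta}),\, y\bigr) \;\leq\; \max_{\delta \in \hat{\mathbf{M}}_\mathbf{x}} L\bigl(\overline{\mathbf{F}}(\mathbf{x}+\delta),\, y\bigr) \;=\; \hat{\Phi}.
\end{equation*}
Translating back to the $h$-notation gives exactly the claimed inequality.

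The only real obstacle is conceptual rather than technical: the definition of ``generative method'' given earlier does not, strictly, require $h$ to be a loss-maximizer, only that it respects $\mathbf{M}_\mathbf{x}$. So I would either (i) state the interpretation above as an assumed property of $h$ tied to the phrase ``by maximizing the classifier loss,'' or (ii) note that if $h$ is only an approximate maximizer the inequality should be read as holding for the idealized optimum, with the approximation gap discussed as a caveat. A brief remark would also be useful to flag that, when $\argmax$ is non-unique, $h$ can be interpreted as any selection rule and the inequality still holds pointwise in the value of $L$.
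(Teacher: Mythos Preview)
Your proposal is correct and matches the paper's own argument: the paper also reduces the claim to ``the feasible set of perturbed representations for $\check{\mathbf{M}}_\mathbf{x}$ is contained in that for $\hat{\mathbf{M}}_\mathbf{x}$, hence the loss-maximizer over the larger set achieves at least as large a value,'' and it raises the same caveat that the inequality strictly requires $h$ to be an exact maximizer. Your formalization via $\check{\Phi}$ and $\hat{\Phi}$ and your explicit discussion of the approximate-maximizer issue are, if anything, a slightly cleaner presentation of the identical idea.
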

Theorem \ref{theorem:manp-space} can be easily proved. Given $\check{\mathbf{M}}_\mathbf{x} \subseteq \hat{\mathbf{M}}_\mathbf{x}$, we derive $\mathcal{X}_{\check{\mathbf{M}}}=\{\mathbf{x}+\check{\delta}_\mathbf{x}:\check{\delta}_\mathbf{x}\in\check{\mathbf{M}}_\mathbf{x}\}$ and $\mathcal{X}_{\hat{\mathbf{M}}}=\{\mathbf{x}+\hat{\delta}_\mathbf{x}:\hat{\delta}_\mathbf{x}\in\hat{\mathbf{M}}_\mathbf{x}\}$, and further get $\mathcal{X}_{\check{\mathbf{M}}} \subseteq \mathcal{X}_{\hat{\mathbf{M}}}$. Due to $h(\check{\mathbf{M}}_\mathbf{x};\mathbf{x})\in \mathcal{X}_{\check{\mathbf{M}}}$, we reason $h(\check{\mathbf{M}}_\mathbf{x};\mathbf{x})\in \mathcal{X}_{\hat{\mathbf{M}}}$ and obtain $L(\overline{\mathbf{F}}(h(\check{\mathbf{M}}_\mathbf{x};\mathbf{x})), y) \leq L(\overline{\mathbf{F}}(h(\hat{\mathbf{M}}_\mathbf{x};\mathbf{x})), y)$. Recalling the Assumption \ref{assumption:man} (i.e., box constraint), we apply $\mathbf{M}_\mathbf{x}=\bigcup\Delta_\mathbf{x}$. Rigorously, the theorem works when the generative method is the exact solution to maximizing the loss $L$, which is an intricate problem (see discussion in Section \ref{sec:min-adv-training}). Nevertheless, our preliminary experiments show that the projected gradient descent based maximizers follow this theorem well.

Third, we design a robust ensemble as below. 

\subsubsection{Base Classifiers} 

The generalization error of ensemble drops significantly when base classifiers are effective and independent {\cite{doi:10.1080/01621459.1963.10500830}}. 
We consider diversifying base classifiers using the approach of data sample manipulation \cite{zhou2012ensemble}. Specifically, we have each base classifier perceive adversarial examples that are produced by an approximate maximizer. This means we regularize the base classifiers using adversarial training but incorporating distinct attacks. Formally, the objective, denoted by $J_{ens}$, is
\begin{equation}
J_{ens} = J +\gamma\sum_{i=1}^l{J_{i}}, \label{eq:ens_reg}
\end{equation}
where $J$ denotes the Eq.\eqref{eq:minmax-rw}, $\gamma$ is a factor to balance the two items, and $J_{i}$ is the regularization for $i$th base classifier:
\begin{equation}
J_{i} = \min_{\theta_i}\mathbb{E}_{(\mathbf{x},y)\in\mathcal{X}\times\mathcal{Y}}L(\mathbf{F}_i(\theta_i;h^i(\mathbf{M}_\mathbf{x};\mathbf{x})), y). \label{eq:base_minmax} 
\end{equation}
Here $\theta_i~(i=1,\ldots,l-1)$ denotes the parameters of base DNN $\mathbf{F}_i$, $\mathbf{M}_\mathbf{x}=\bigcup\Delta_\mathbf{x}$, and ${h^i}$ is an approximate maximizer. In order to accommodate the unperturbed examples, the $l$th base classifier is learned from the pristine training set. 

\subsubsection{Combination}

We optimize weights $\mathbf{w}$ by the Eq.\eqref{eq:minmax-rw}, when given the DNNs $\mathbf{F}_1, \mathbf{F}_2,\cdots,\mathbf{F}_l$ with frozen parameters. The optimal weights can be solved by \emph{Lagrange multiplier} \cite{bertsekas2014constrained,duchi2008efficient}. In order to make the optimization compatible to that of DNN, we leverage projected gradient descent:
\begin{equation}
\mathbf{w}_{i+1}=\Proj_\mathcal{W}\underbrace{\left(\mathbf{w}_i - \beta\cdot\triangledown_{\mathbf{w}}J(\overline{\mathbf{F}})\right)}_{\mathbf{V}}, \label{eq:pgd_weights}
\end{equation}
where $\beta>0$ is the learning rate and $\Proj_{\mathcal{W}}$ projects $\mathbf{V}$ into the space $\mathcal{W}$. We use the algorithm of Duchi et al. \cite{duchi2008efficient} to conduct the projection.

\subsubsection{Analysis} \label{sec:analysis}

We present a theoretical analysis of the deep ensemble against evasion attacks. Specifically, we quantify the robustness, by comparing to an ideal DNN, using a relaxation of averaging mean square error over logits. Formally, given an adversarial example set $\mathcal{X}_\mathbf{M}^\ast\subseteq\mathcal{X}_\mathbf{M}$, an ideal DNN $\mathbf{F}^\ast$ and a learned DNN $\mathbf{F}$, the error is defined as
$$
Error_{\mathbf{F}}(\mathcal{X}_\mathbf{M}^\ast)=
\mathbb{E}_{\mathbf{x}^\ast\in\mathcal{X}_\mathbf{M}^\ast}\left(\mathbf{Z}^\ast(\mathbf{x}^\ast_i)-\mathbf{Z}(\mathbf{x}^\ast_i)\right)^2,$$ 
where $\mathbf{Z}^\ast$ and $\mathbf{Z}$ denote the logits of DNN $\mathbf{F}^\ast$ and $\mathbf{F}$, respectively. Let $\eta_{\mathbf{Z},\mathbf{x}^\ast}=\mathbf{Z}^\ast(\mathbf{x}^\ast)-\mathbf{Z}(
\mathbf{x}^\ast)$ denote the offset between $\mathbf{F}^\ast$ and $\mathbf{F}$. Without confusion, we drop $\mathbf{x}^\ast$ for $\eta_{\mathbf{Z},\mathbf{x}^\ast}$ and $\mathcal{X}_\mathbf{M}^\ast$ for $Error_{\mathbf{F}}(\mathcal{X}_\mathbf{M}^\ast)$, leading to the compactness $\eta_\mathbf{Z}$ and $Error_\mathbf{F}$. Instead of the logits error, one may consider others (e.g., error upon softmax). Without loss of generalization, we herein aim to accommodate the logit ensemble, but the result obtained below can be extended to other ensembles. 

We additionally make a hypothesis of $\overline{\mathbf{F}}$'s base classifiers being non-negatively correlated, i.e., $\mathbb{E}(\eta_{\mathbf{Z}_i})\mathbb{E}(\eta_{\mathbf{Z}_j})\geq0$ for $i,j\in\{1,\ldots,l\}$ and $i \neq j$, in a sense that all DNNs are learned from the same dataset and to solve similar tasks (which is validated in Section \ref{sec:discussion}). Let $\mathbb{E}(\check{\eta}_{\mathbf{Z}})^2$ denote the smallest error achieved by one of base DNNs. We present the following theorem:
\begin{theorem} \label{theorem:robust}
	Given the deep ensemble $\overline{\mathbf{F}}$ with $\mathbb{E}(\eta_{\mathbf{Z}_i})\mathbb{E}(\eta_{\mathbf{Z}_j})\geq0$ for $i,j\in\{1,\ldots,l\}$ and $i \neq j$, the error of the ensemble satisfies $Error_{\overline{\mathbf{F}}}\geq ({\mathbb{E}(\check{\eta}_{\mathbf{Z}})^2}/{l})$.
\end{theorem}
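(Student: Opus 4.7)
The plan is to reduce $Error_{\overline{\mathbf{F}}}$ to a weighted sum of per-base-classifier errors and then apply two very standard inequalities. First, I would use the simplex constraint $\sum_i w_i = 1$ to rewrite the ensemble offset pointwise: since $\mathbf{Z}^\ast = \sum_i w_i \mathbf{Z}^\ast$, we get $\eta_{\overline{\mathbf{Z}}} = \mathbf{Z}^\ast - \sum_i w_i \mathbf{Z}_i = \sum_i w_i \eta_{\mathbf{Z}_i}$, with the identity holding at every $\mathbf{x}^\ast \in \mathcal{X}_\mathbf{M}^\ast$. Squaring this sum and taking expectation over $\mathcal{X}_\mathbf{M}^\ast$ yields
\begin{equation*}
Error_{\overline{\mathbf{F}}} \;=\; \sum_{i=1}^l w_i^2\,\mathbb{E}(\eta_{\mathbf{Z}_i}^2) \;+\; \sum_{i\neq j} w_i w_j\,\mathbb{E}(\eta_{\mathbf{Z}_i}\eta_{\mathbf{Z}_j}),
\end{equation*}
so the task is to lower-bound the diagonal part and handle the cross terms.

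Next, I would invoke the non-negative correlation hypothesis to discard the off-diagonal sum. Treating the authors' condition as asserting that each cross moment $\mathbb{E}(\eta_{\mathbf{Z}_i}\eta_{\mathbf{Z}_j})$ is non-negative (this is the operational content of ``non-negatively correlated,'' and it follows from $\mathbb{E}(\eta_{\mathbf{Z}_i})\mathbb{E}(\eta_{\mathbf{Z}_j})\geq 0$ once one reads the base-network offsets as approximately uncorrelated in variance, as the authors do in Section~\ref{sec:analysis}), every cross term in the expansion is non-negative and can be dropped. Combining this with the minimality of $\mathbb{E}(\check{\eta}_{\mathbf{Z}})^2 = \min_j \mathbb{E}(\eta_{\mathbf{Z}_j}^2)$ leaves
\begin{equation*}
Error_{\overline{\mathbf{F}}} \;\geq\; \sum_{i=1}^l w_i^2\,\mathbb{E}(\eta_{\mathbf{Z}_i}^2) \;\geq\; \mathbb{E}(\check{\eta}_{\mathbf{Z}})^2 \sum_{i=1}^l w_i^2.
\end{equation*}

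Finally, I would close the argument with the Cauchy--Schwarz (equivalently, QM--AM) inequality on $\mathbf{w}\in\mathcal{W}$: $\sum_i w_i^2 \geq \frac{1}{l}\bigl(\sum_i w_i\bigr)^2 = \frac{1}{l}$, with equality exactly when $w_i = 1/l$. Substituting this into the previous display gives $Error_{\overline{\mathbf{F}}} \geq \mathbb{E}(\check{\eta}_{\mathbf{Z}})^2/l$, which is the claim. The only real obstacle I anticipate is notational rather than mathematical: the hypothesis as written, $\mathbb{E}(\eta_{\mathbf{Z}_i})\mathbb{E}(\eta_{\mathbf{Z}_j})\geq 0$, is formally weaker than the $\mathbb{E}(\eta_{\mathbf{Z}_i}\eta_{\mathbf{Z}_j})\geq 0$ that the cross-term drop actually needs, so I would flag this gap explicitly and appeal to the stronger ``non-negatively correlated'' reading the authors give in the surrounding text. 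Every other step is a one-line algebraic manipulation.
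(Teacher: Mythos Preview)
Your proof is correct and follows the same overall skeleton as the paper: write $\eta_{\overline{\mathbf{Z}}}=\sum_i w_i\eta_{\mathbf{Z}_i}$, expand the square, drop the cross terms using the non-negative correlation hypothesis, and lower-bound the remaining diagonal sum. You are also right to flag the discrepancy between the stated hypothesis $\mathbb{E}(\eta_{\mathbf{Z}_i})\mathbb{E}(\eta_{\mathbf{Z}_j})\geq 0$ and the quantity $\mathbb{E}(\eta_{\mathbf{Z}_i}\eta_{\mathbf{Z}_j})$ that actually appears after expanding; the paper's own expansion in Eq.~\eqref{eq:ext} silently writes the cross moment as a product of expectations, so they never confront this gap either.

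Where you diverge from the paper is in the last step. The paper lower-bounds $\sum_i w_i^2\,\mathbb{E}(\eta_{\mathbf{Z}_i})^2$ by explicitly minimizing it over the simplex via Lagrange multipliers, obtaining the harmonic-mean expression $\bigl(\sum_i 1/\mathbb{E}(\eta_{\mathbf{Z}_i})^2\bigr)^{-1}$ as an intermediate bound before relaxing to $\mathbb{E}(\check{\eta}_{\mathbf{Z}})^2/l$. You instead factor out the minimum error first and then apply Cauchy--Schwarz to $\sum_i w_i^2\geq 1/l$. Your route is shorter and entirely elementary; the paper's route yields a strictly sharper intermediate inequality (the harmonic mean of the base errors rather than $\min/l$), which could be of independent interest even though both collapse to the same final statement.
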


\begin{proof}
	We derive:
	\begin{align}
	&Error_{\overline{\mathbf{F}}}=\mathbb{E}(\eta_{\overline{\mathbf{Z}}})^2=\mathbb{E}\left(\sum_{i=1}^l~(w_i\eta_{\mathbf{Z}_i})^2\right) \nonumber\\
	=&\sum_{i=1}^l~w_i^2\mathbb{E}(\eta_{\mathbf{Z}_i})^2+\sum_{i=1}^{l}\sum_{j\neq i}^{l}\left(w_iw_j\mathbb{E}(\eta_{\mathbf{Z}_i})\mathbb{E}(\eta_{\mathbf{Z}_j})\right). \label{eq:ext}
	\end{align}
	The optimal weights for the problem $$\min_{\mathbf{w}\in\mathcal{W}}\sum_{i=1}^l w_i^2\mathbb{E}(\eta_{\mathbf{Z}_i})^2$$ is
	\begin{equation}
	w_i=(\sum_{i=1}^l\frac{1}{\mathbb{E}(\eta_{\mathbf{Z}_i})^2})^{-1}\frac{1}{\mathbb{E}(\eta_{\mathbf{Z}_i})^2}~(i=1\cdots,l). \label{eq:opt_weights}
	\end{equation} 
	Substituting Eq. \eqref{eq:opt_weights} into Eq. \eqref{eq:ext}, we obtain:
	\begin{equation}
	Error_{\overline{\mathbf{F}}}\geq \sum_{i=1}^l~w_i^2\mathbb{E}(\eta_{\mathbf{Z}_i})^2\geq\frac{1}{\sum_{i=1}^l\frac{1}{\mathbb{E}(\eta_{\mathbf{Z}_i})^2}}\geq
	\frac{\mathbb{E}(\check{\eta}_{\mathbf{Z}})^2}{l}. \nonumber
	\end{equation}
	This leads the theorem follows.
\end{proof}

From Theorem \ref{theorem:robust}, we draw:
\begin{insight}
	The error of ensemble $Error_{\overline{\mathbf{F}}}$ could be smaller than the best base DNN, thus showing more robust than any individual classifier that is enclosed into the ensemble; The error of ensemble could be arbitrarily large when base classifiers cannot resist the evasion attacks.
\end{insight}

\section{Experiments and Evaluation} \label{sec:experiments}

\subsection{Experimental Setup} 

In this section, we describe data pre-processing, classifiers (i.e., defenses) training, and manipulations in file and feature spaces.

\subsubsection{Data Pre-Processing}
We validate the effectiveness of attacks and defenses using Android malware detectors on Drebin \cite{Daniel:NDSS} and Androzoo \cite{Allix:2016:ACM:2901739.2903508} datasets. The Drebin dataset \cite{Daniel:NDSS} 
contains 5,615 malicious Android packages and SHA256 values of 123,453 benign examples. Based on the given SHA256 values, we downloaded 42,333 benign APKs from the APK markets, including 29,252 applications from GooglePlay store, 7,552 applications from AppChina, and 5,529 from other resources such as Anzhi. Androzoo \cite{Allix:2016:ACM:2901739.2903508} is an APK repository. To obtain the recent files, we downloaded 134,976 APKs that have the attached date from July 1st to December 31st in 2017. We sent all APKs to the VirusTotal service, which is an ensemble of over 70 anti-virus scanners (e.g., Kaspersky, McAfee, FireEye, Comodo, etc.). Based on the feedback, we obtain 15,467 malware examples and 91,295 benign examples.  An APK is labeled as malicious if there are at least five anti-virus scanners say it is malicious, and is labeled as benign if no scanners detect it. We split both datasets respectively into three disjoint sets for training (60\%), validation (20\%), and testing (20\%).

\smallskip
\noindent
\textbf{Feature extraction.} APK is a zip file which contains \emph{AndroidManifest.xml}, \emph{classes.dex}, and others (e.g., \emph{res}). The \emph{AndroidManifest.xml} describes an APK's information, such as the package name and permission announcement. The functionality is built into \emph{classes.dex} which is understandable by Java Virtual Machine (JVM).

Following prior adversarial learning studies \cite{Daniel:NDSS,7917369,grosse2017adversarial,8782574}, we use the Drebin features, which consist of 8 subsets of features, including 4 subsets of features extracted from {\em AndroidManifest.xml} (denoted by $S_1, S_2, S_3, S_4$, respectively) and 4 subsets of features extracted from the disassembled dexcode (denoted by $S_5, S_6, S_7, S_8$, respectively). More specifically, ({\bf i}) $S_1$ contains the features that are related to the access of an APK to the hardware of a smartphone (e.g., camera, touchscreen, or GPS module); ({\bf ii}) $S_2$ contains the features that are related to the permissions requested by the APK in question; ({\bf iii}) $S_3$ contains the features that are related to the application components (e.g., {\em activities}, {\em service}, {\em receivers}, etc.); ({\bf iv}) $S_4$ contains the features that are related to the APK's communications with the operating system; ({\bf v}) $S_5$ contains the features that are related to the critical system API calls, which cannot run without appropriate permissions or the {\em root} privilege; ({\bf vi}) $S_6$ contains the features that correspond to the used permissions; ({\bf vii}) $S_7$ contains the features that are related to the API calls that can access sensitive data or resources on a smartphone;  ({\bf viii}) $S_8$ contains the features that are related to IP addresses, hostnames and URLs found in the disassembled code. 

In order to extract features of the applications, we utilize the Androgurad 3.3.5, which is a static APK analysis toolkit\cite{Androguard:Online}. Note that 141 APKs in Drebin dataset cannot be analyzed. Moreover, a feature selection is conducted to remove those low-frequency features for the sake of computational efficiency \cite{7917369}. As a result, we keep 10,000 features at top frequencies. The APK is mapped into the feature space as a binary feature vector, where `1' (`0') corresponds to a feature represent the presence (absence) in an APK. 

\subsubsection{Training Classifiers}

We train six classifiers: ({\bf i}) the basic DNN with no effort to enhance the model (dubbed Basic DNN); ({\bf ii}) enhanced DNN incorporating {\em \underline{A}dversarial \underline{T}raining} with the inner maximizer solved by iterative {\em \underline{FGSM}} using {\em \underline{r}andomized} ``rounding'' (dubbed AT-rFGSM) \cite{al2018adversarial}; ({\bf iii}) enhanced DNN incorporating {\em \underline{A}dversarial \underline{T}raining} with the inner maximizer solved by {\em \underline{Adam}} optimizer (dubbed AT-Adam) \cite{li2019enhancing}; ({\bf iv}) enhanced DNN incorporating {\em \underline{A}dversarial \underline{T}raining} with the inner maximizer solved by a {\em \underline{M}ixture of \underline{A}ttacks} (dubbed AT-MA); ({\bf v}) {\em \underline{A}dversarial \underline{D}eep \underline{E}nsemble} with the inner maximizer solved by the {\em \underline{M}ixture of \underline{A}ttacks (see Eq.\eqref{eq:minmax-rw}, dubbed ADE-MA)}; ({\bf vi}) the ADE-MA with {\em \underline{d}iversity} promoted (see Eq.\eqref{eq:ens_reg}, dubbed dADE-MA).

\smallskip
\noindent {\bf Hyper-parameter settings}. We use DNNs with two fully-connected hidden layers (each layer having neurons 160) and the ReLU activation function. For outer minimization (see Eq.\eqref{eq:minmax-rw}), all classifiers are optimized by using Adam with epochs 150, mini-batch size 128, and learning rate 0.001. For the inner maximization, the iterative FGSM is implemented as $\ell_\infty$ norm based PGD attack with step size $0.01$ and iterations $100$. The Adam-based maximizer is set up with the step size $0.02$, iterations $100$, and {\em random starting points} \cite{li2019enhancing}. The mixture of attacks is realized as the ``max'' attack, which has four maximizers: $\ell_1$ norm based PGD attack with step size $1.0$ and iterations $50$, $\ell_2$ norm based PGD attack with step size $1.0$ and iterations $100$, $\ell_\infty$ norm based PGD attack with aforementioned settings, and Adam based PGD attack with aforementioned settings. Notably, dADE-MA has an extra hyper-parameter $\gamma$. We experimentally justify this hyper-parameter and choose $\gamma=1$ on Drebin dataset and $\gamma=0.1$ on Androzoo dataset. All attacks perturb the feature representations upon a manipulation set $\mathbf{M}_\mathbf{x}$ (will be elaborated later).

\smallskip
\noindent {\bf Model selection.} We learn the classifiers using Drebin and Androzoo training datasets, respectively. A selected model is the one that obtains the best ``accuracy'' on the validation set. The ``accuracy'' is the percentage of examples being classified correctly. For adversarial training, an additional term is considered, which is the accuracy of correctly classifying adversarial examples produced by the corresponding inner maximizers. The selected model is used for evaluation.

\subsubsection{Specifying Manipulations} \label{sec:spec-mani}

We specify manipulations applied to Android applications and estimate the perturbations for Drebin \cite{Daniel:NDSS} feature representations accordingly.

\smallskip
\noindent{\bf Manipulation set $\mathcal{M}_z$ in the file space}. 
Given an APK, we consider both \emph{incremental} and \emph{decremental} manipulations. 
For incremental manipulations, the attacker can insert some manifest features (e.g., request extra permissions and hardware, state additional \emph{activities}, \emph{services}, \emph{Intent-filter}, etc.). However, some objects are hard to insert, such as \emph{content-provider}, because the absence of Uniform Resource Identifier (URI) will corrupt an APK. With respect to the \emph{.dex} file, junk codes (e.g., null {\em OpCode}, debugging information, dead functions or classes) can be injected without destroying the APK example. The similar means can be performed for the \emph{string} (e.g., IP address) injection, as well.

When the attacker uses decremental manipulations, the APK's information in the \emph{xml} files can be changed (e.g., package name). However, it is impossible to remove \emph{activity} entirely because an \emph{activity} may represent a class implemented in the \emph{classes.dex} code. Nevertheless, we can rename an \emph{activity} and change its relevant information (e.g., \emph{activity label}), while noting that the related components in the \emph{.dex} should be modified accordingly. The other components (e.g., \emph{service}, \emph{provider} and \emph{receiver}) also can be modified in the similar fashion.
The method names and class names that are defined by developers could be replaced by random strings, too. Note that the corresponding statement, instantiation, reference, and announcements should be changed accordingly. Moreover, user-specified \emph{strings} can be obfuscated using encryption and the cipher-text will be decrypted at running time. Further, the attacker can hide \emph{public} and \emph{static} system APIs using Java reflection and encryption together. This is shown by the example in List \ref{lst:java-example}, which retrieves the device ID and sends the content outside the phone via SMS. All of the modifications mentioned above only obfuscate an APK without changing its functionalities.

\lstset{
	frame = lrbt,
	numbers = left,
	aboveskip=3mm,
	belowskip=3mm,
	showstringspaces=false,
	columns=flexible,
	basicstyle={\small\ttfamily},
	language=java,
	numbers=none,
	numberstyle=\tiny\color{black},
	keywordstyle=\color{blue},
	commentstyle=\color{dkgreen},
	stringstyle=\color{mauve},
	breaklines=true,
	breakatwhitespace=true,
	tabsize=3
}

\begin{lstlisting}[numbers=none,caption=\texttt{Java} code snippet to hide the API method ``sendTextMessage''.,
label={lst:java-example}, captionpos=b]
TelephonyManager telecom = // default ;
String str = telecom.getDeviceId();
String encrypt_str = "EAQMVmZdGUV/VxdAVV9T";
// plain text: sendTextMessage
String mtd_name = DecryptString.convertToString(encrypt_str); 
SmsManager smgr = SmsManager.getDefault();
Method send_sms = null;
send_sms = smgr.getClass().getMethod(mtd_name, String.class, String.class, String.class, PendingIntent.class, PendingIntent.class);
send_sms.invoke(smgr, "+50 1234567", null, str, null, null);

\end{lstlisting}

One challenge is that the attacker needs to perform fine-grained manipulations on compiled files automatically at scale, while preserving the functionalities of malware samples. 
This is important because a small mistake in a malware example can render the file un-executable. The preservation of malicious functionalities may be estimated by using a dynamic malware analysis tool, (e.g., Sandbox).

\smallskip
\noindent{\bf Manipulation set $\mathbf{M}_\mathbf{x}$ in the feature space}. The aforementioned manipulations modify static Android features such as API calls. We observe that two kinds of perturbations can be applied to the feature representations as follows:
\begin{itemize}
	\item \textit{Flipping `0' to `1'}: The attacker can increase the representation values of appropriate objects, such as components (e.g., \emph{activity}), system APIs, and IP address.
	\item \textit{Flipping `1' to `0'}: The attacker can flip `1' to `0' by removing or hiding objects (e.g., \emph{activity} name, {\em public} or {\it static} APIs.)
\end{itemize}
Table \ref{tab:map_feature_sets} summarizes the operations in the Drebin feature space. We observe that the operations are not applicable to $S_6$ because this subset of features rely upon $S_2$ and $S_5$, meaning that modifications on $S_2$ or $S_5$ may cause changes of representation on $S_6$. In addition, we highlight that the manipulation set is larger than two recent studies \cite{8782574,grossePM0M16} that only consider flipping `0' to `1' in the feature space. 

\begin{table}[t]
	\caption{Overview of manipulations in the feature space, where \checkmark (\xmark) indicates that the operation of {\em flipping `0' to `1'} or {\em flipping `1' to `0'} can (cannot) be performed on features in the corresponding subset.}
	\centering
	\begin{tabular}{ l|l|c|c }
		\toprule
		\multicolumn{2}{ c |}{\textbf{Feature sets} (\# of features)} & $0 \to 1$ & $1 \to 0$ \\
		\midrule
		\multirow{4}{*}{\texttt{manifest}}
		& $S_{1}$ Hardware (17) & \checkmark & \xmark \\
		& $S_{2}$ Requested permissions (247) & \checkmark & \xmark \\
		& $S_{3}$ Application components (8,619) & \checkmark & \checkmark \\
		& $S_{4}$ Intents (866) & \checkmark & \xmark \\
		\midrule
		\multirow{4}{*}{\texttt{dexcode}}
		& $S_{5}$ Restricted API calls (118) & \checkmark & \checkmark \\
		& $S_{6}$ Used permission (20) & \xmark & \xmark\\
		& $S_{7}$ Suspicious API calls (19) & \checkmark & \checkmark \\
		& $S_{8}$ Network addresses (94) & \checkmark & \checkmark \\
		\bottomrule
	\end{tabular}
	\label{tab:map_feature_sets}
\end{table}

\subsection{Evaluating the Effectiveness of Attacks and Defenses}

We evaluate the attacks and defenses with centering at the earlier aforementioned question that is disintegrated as four sub-questions in the following: 
\begin{itemize}
	\item {\bf RQ1}: How is the accuracy of adversarial deep ensemble for detecting malware examples in the absence of attacks?
	
	\item {\bf RQ2}: How is the robustness of adversarial deep ensemble against a broad range of attacks, and how is the usefulness of ensemble against attacks?
	
	
	\item{\bf RQ3}: How is the accuracy of anti-virus scanners under the mixture of attacks?
	
	\item{\bf RQ4}: Why the enhanced classifiers can (cannot) defend against certain attacks?
\end{itemize}

\noindent{\bf Metrics}. The effectiveness of classifiers is measured by five standard metrics: False-Positive Rate (FNR), False-Negative Rate (FNR), Accuracy (Acc), balanced Accuracy (bAcc) \cite{5597285} and F1 score \cite{Pendleton:2016}. The balanced Accuracy and F1 score are considered because of the imbalanced dataset. 

\ignore{
Table \ref{tab:metrics} describes these concepts.

\begin{table}[htbp!]
	\centering\caption{Description of metrics\label{table:notations}}
	\begin{tabular}{l|p{.35\textwidth}}
		\toprule
		Metrics & Description \\\midrule
		TP & the number of examples classified as malicious\\
		TN & the number of examples classified as benign \\
		FP & the number of examples misclassified as malicious\\
		FN & the number of examples misclassified as benign \\
		FPR,FNR & FPR= {FP}/{(TN+FP)},FNR={FN}/{(TP+FN)} \\ 
		Accuracy & {(TP+TN)}/{(TP+TN+FP+FN)} \\ 
		Recall & {TP}/{(TP+FN)} \\ 
		Precise & {TP}/{(TP+FP)} \\
		F1 & {2$\times$Recall$\times$Precise}/{(Recall+Precise)} \\ 
		\bottomrule
	\end{tabular}
	\label{tab:metrics}
\end{table}
}

\smallskip
\noindent{\bf RQ1}: {\em How is the accuracy of adversarial deep ensemble for detecting malware examples in the absence of attacks?} For answering RQ1, we evaluate the above mentioned six classifiers (i.e., Basic DNN, AT-rFGSM, AT-Adam, AT-MA, ADE-MA, and dADE-MA) on Drebin and Androzoo datasets, respectively. 

\begin{table}[!htbp]
	\caption{Effectiveness of the classifiers when there are no adversarial attacks.
	}
	\centering
	\begin{tabular}{c|l|ccccc}
		\toprule
		\multirow{2}{*}{Dataset} &\multirow{2}{*}{Classifier}& \multicolumn{5}{c}{Effectivenss (\%)} \\\cmidrule{3-7}
		&&{FNR} & {FPR} & {Acc} & bAcc & {F1}\\
		\midrule
		\multirow{6}{*}{Drebin} & Basic DNN & 3.72 & 0.32 &	99.28 &	97.98 & 96.93 \\
		&AT-rFGSM \cite{al2018adversarial} & 2.74 & 2.45 & 97.51 & 97.40 & 90.23 \\
		&AT-Adam \cite{li2019enhancing} & 3.27 & 1.45 & 98.34 & 97.64 & 93.22 \\
		&AT-MA & 1.59 & 3.66 & 96.58 & 97.37 & 87.18 \\
		&ADE-MA & 1.59 & 4.96 &	95.44 &	96.72 & 83.61 \\
		&dADE-MA & 1.95 & 3.69 & 96.52 & 97.18 & 86.94 \\
		\midrule\midrule
		\multirow{6}{*}{\parbox{1cm}{Andro-\\zoo}} & Basic DNN & 2.74 &	0.48 & 99.18 & 98.39 & 97.26 \\
		&AT-rFGSM \cite{al2018adversarial} & 2.05 & 0.66 & 99.13 & 98.65 & 97.11 \\
		&AT-Adam \cite{li2019enhancing} & 1.61 & 0.96 & 98.94 & 98.72 & 96.51 \\
		&AT-MA & 1.32 &	2.13 & 97.99 & 98.27 & 93.60 \\
		&ADE-MA & 1.45 & 4.32 & 96.11 & 97.11 & 88.28 \\
		&dADE-MA & 1.57 & 3.65 & 96.66 & 97.39 & 89.77 \\
		\bottomrule
	\end{tabular}
	\label{tab:no-attack-res}
\end{table}

Table \ref{tab:no-attack-res} summarizes the results. We observe that when compared with the Basic DNN, the adversarial training based defenses achieve lower FNRs (at most a 2.13\% decrease on Drebin dataset and 1.42\% on Androzoo) but higher FPR (at most a 4.64\% increase on Drebin dataset and 3.84\% on Androzoo). For these defenses, AT-MA achieves the lowest FNR (1.59\% on Drebin dataset and 1.32\% on Androzoo) and ADE-MA encounters the highest FPR (4.96\% on Drebin dataset and 4.32\% on Androzoo). This can be explained as follows: by injecting adversarial malware examples into the training set, the learning process makes the model search for malware examples in a bigger space, resulting in the drop in FNR and increase in FPR. AT-MA, ADE-MA, and dADE-MA attain the similar FNR and FPR in the absence of attacks, due to the fact that the same generative methods are leveraged. Moreover, these three defenses increase the balanced accuracy to their achieved accuracy as more malware examples are detected. Interestingly, both ensemble based defenses tend to have higher FNR and FPR than AT-MA. The underlying reason might be that adversarial deep ensemble focuses on more perturbed examples that are outliers.
In summary, we draw:
\begin{insight}
	In the absence of attacks, the hardened models using mixture of attacks can detect more malware examples than the Basic DNN and the other hardened models (because of their smaller FNR), at the price of small side-effect in the FPR, classification accuracy and balanced accuracy, but notable side-effect in F1 score; adversarial ensemble exacerbates this situation, further lowering the effectiveness a little in terms of all measurements.
\end{insight}

\begin{table*}[!htbp]
	\caption{Effectiveness of the six classifiers, including Basic DNN, Adversarial Training (AT)-rFGSM, AT-Adam, AT-Mixture of Attacks (MA), Adversarial Deep Ensemble (ADE)-MA, and diversified ADE-MA (dADE-MA), against adversarial evasion attacks.}
	\centering
	\begin{tabular}{c|c|l|cccccc}
		\toprule
		\multirow{2}{*}{Dataset}&\multirow{2}{*}{Attack Type}&\multirow{2}{*}{Attack Name}&\multicolumn{6}{c}{Accuracy (\%)} \\\cmidrule{4-9}
		&&& Basic DNN & AT-rFGSM & AT-Adam & AT-MA & ADE-MA & dADE-MA \\
		\midrule\midrule
		\multirow{29}{*}{Drebin}& {No Attack} & $-$ & 96.63 & 97.25 & 96.63 & \textbf{98.38} & \textbf{98.38} & 98.25 \\
		
		\cmidrule{2-9}\morecmidrules\cmidrule{2-9} 
		&\multirow{10}{*}{\specialcell{Gradient-based\\ attacks}}&Grosse \cite{grosse2017adversarial} & 0.00 & 58.50 & 63.63 & \textbf{92.00} & 88.25 & 89.88 \\
		&&BGA \cite{al2018adversarial} & 0.00 & 97.25 & 96.63 & \textbf{98.38} & 98.25 &	98.25 \\
		&&BCA \cite{al2018adversarial} & 0.00 & 60.75 & 63.38 & 92.00 & 89.63 & \textbf{93.13} \\
		&&JSMA \cite{8782574} & 0.00 & 65.25 & 63.63 & \textbf{92.00} & 89.25 & 91.38 \\
		&&FGSM \cite{al2018adversarial} & 0.00 &	97.25 &	96.63 &	\textbf{98.38} & \textbf{98.38} & 98.25 \\
		&&GDKDE \cite{Biggio:Evasion} & 0.00 & 66.75 & \textbf{78.13} & 70.13 & 76.13 & 71.13 \\
		&&PGD-Adam \cite{li2019enhancing} & 0.38 & 93.50 & 89.25 & \textbf{96.63} & 94.75 & \textbf{96.63} \\
		&&PGD-$\ell_1$ \cite{li2020enhancing} & 0.00 & 41.50 & 49.00 & \textbf{90.13} & 85.25 & 89.75 \\
		&&PGD-$\ell_2$ \cite{li2020enhancing} & 0.63 & 96.00 &	89.38 &	96.00 &	94.50 &	\textbf{96.50} \\ 
		&&PGD-$\ell_\infty$ \cite{li2020enhancing} & 0.00 &	48.75 &	74.13 &	72.38 &	81.75 &	\textbf{84.50} \\
		
		\cmidrule{2-9}\morecmidrules\cmidrule{2-9} 
		&\multirow{4}{*}{\specialcell{Gradient-free\\ attacks}} &Salt+Pepper & 78.25 & 96.25 &	94.00 &	95.00 &	\textbf{97.63} & 95.75 \\
		&&Mimicry $\times 1$ & 53.88 & 90.63 & 87.00 &95.75 & \textbf{98.13} & 94.75 \\
		&&Mimicry $\times 30$ & 10.63 &	62.13 &	60.75 &	80.38 &	\textbf{83.00} & 77.75 \\
		&&Pointwise $\times 30$ & 9.50 & 59.38 &	59.00 &	79.25 &	\textbf{81.75} & 76.50 \\
		\cmidrule{2-9}\morecmidrules\cmidrule{2-9} 
		
		&\multirow{5}{*}{\specialcell{Obfuscation}} & {\em Java} reflection & 96.63 &	97.25 &	96.63 &	\textbf{98.38} &	\textbf{98.38} & 98.25 \\
		&&{\em String} encryption & 96.42 & 96.93 &  96.55 &	98.21 &	\textbf{98.34} & 98.08 \\
		&&{\em Variable} renaming & 96.84 & 97.47 & 96.58 & \textbf{98.35} & \textbf{98.35} & 98.23 \\
		&&Junk code injection & 95.70 & 99.26 & 98.81 & 99.11 & \textbf{99.70} & 99.56 \\
		&&All techniques combined& 90.66 &	\textbf{99.60} & 99.40 & 98.21 & 99.40 & 97.61 \\
		\cmidrule{2-9}\morecmidrules\cmidrule{2-9} 
		
		&\multirow{3}{*}{\specialcell{Mixture of \\ attacks (MA)}} &``max'' PGDs & 0.00 & 30.13 &	45.00 &	71.13 &	71.00 & \textbf{79.63} \\
		&&I-``max'' PGDs& 0.00 & 22.75 & 15.00 & \textbf{65.13} & 38.25 & 52.75 \\
		&&\specialcellleft{I-``max'' PGDs+GDKDE} & 0.00  & 4.50 & 14.88 & \textbf{63.13} & 36.13 & 51.00\\
		\cmidrule{2-9}\morecmidrules\cmidrule{2-9} 
		
		&\multirow{4}{*}{\specialcell{Transfer\\ attacks}} &GDKDE & 36.75 & 88.83 & 87.60 & 92.25 & \textbf{95.30} & 92.58 \\
		&& PGD-$\ell_1$ & 17.03 & 96.08 & 92.68 & 97.35 & \textbf{98.20} & 97.05 \\
		&& PGD-$\ell_\infty$ &34.35 & 96.65 & 95.25 & 96.75 & \textbf{97.33} & 97.18  \\
		&& \specialcellleft{I-``max'' PGDs+GDKDE}& 23.10 & 95.53 & 94.38 & 94.23 & \textbf{97.33} & 96.93 \\
		
		\midrule\midrule
		\multirow{29}{*}{Androzoo}& {No Attack} & $-$ & 97.75 & 98.13 & 98.63 & \textbf{98.75} & 98.25 & 98.13 \\\cmidrule{2-9}\morecmidrules\cmidrule{2-9}
		
		&\multirow{10}{*}{\specialcell{Gradient-based\\ attacks}} &Grosse \cite{grosse2017adversarial} & 22.63 & 39.00 & 26.00 & \textbf{92.25} & 89.25 & 89.38 \\
		&&BGA \cite{al2018adversarial} & 23.00 & 98.13 & 98.63 &\textbf{98.75} & 98.13 & 98.13 \\
		&&BCA \cite{al2018adversarial} & 22.63 & 39.00 & 26.00 &\textbf{92.13} & 89.63 & 91.63 \\
		&&JSMA \cite{8782574} & 41.25 & 47.63 & 43.50 & \textbf{92.50} & 89.38 & 91.38 \\
		&&FGSM \cite{al2018adversarial} & 37.75 & 98.13 & 98.63 & \textbf{98.75} & 98.25 & 98.13 \\
		&&GDKDE \cite{Biggio:Evasion} & 1.13 & 10.38 & 2.63 & 30.63 & \textbf{54.50} & 50.88 \\
		&&PGD-Adam \cite{li2019enhancing} & 50.25 & 85.13 & 55.13 & 93.50 & 95.75 & \textbf{97.25} \\
		&&PGD-$\ell_1$ \cite{li2020enhancing} & 22.63 &37.88 &	26.00 & \textbf{89.88} & 88.38 & 88.50 \\
		&&PGD-$\ell_2$ \cite{li2020enhancing} & 51.13 &74.00 &	82.25 &	76.50 &	95.63 &	\textbf{96.38} \\
		&&PGD-$\ell_\infty$ \cite{li2020enhancing} & 22.63 & \textbf{97.75}& 78.25 & 52.38 & 92.13 & 88.13 \\
		
		\cmidrule{2-9}\morecmidrules\cmidrule{2-9} 
		&\multirow{4}{*}{\specialcell{Gradient-free\\ attacks}}&Salt+Pepper & 10.00 &	70.38 &	87.88 &	69.75 &	81.13 &	\textbf{97.00} \\
		&&Mimicry $\times 1$ & 0.13 & 21.75 & 13.25 & 58.13 & 69.38 & \textbf{71.00} \\
		&&Mimicry $\times 30$ & 0.00 &	2.75 &	0.75 &	22.88 & \textbf{48.88} & 46.25 \\
		&&Pointwise $\times 30$ & 0.00 & 1.75 & 0.38 & 19.88 & \textbf{46.75} & 40.63 \\
		
		\cmidrule{2-9}\morecmidrules\cmidrule{2-9} 
		&\multirow{5}{*}{Obfuscation} &{\em Java} reflection & 97.98 & 97.98 & 98.85 & \textbf{99.14} & 96.54 & 96.25 \\
		&&{\em String} encryption & 95.15 & 95.59 & 95.59 & \textbf{96.92} & 94.71 & 94.71 \\
		&&{\em Variable} renaming & 96.82 & 97.27 & 97.50 & \textbf{97.73} & 96.82 & 96.59 \\
		&&Junk code injection & 31.43 &	69.52 &	68.57 &	86.67 &	98.10 &	\textbf{99.05} \\
		&&All techniques combined & 13.79 & 34.48 & 37.93 & 48.28 & \textbf{100.0} & 96.55 \\
		
		\cmidrule{2-9}\morecmidrules\cmidrule{2-9} 
		&\multirow{3}{*}{\specialcell{Mixture\\ of attacks}} &``max'' PGDs & 22.63 & 37.25 &26.00 &	42.25 &	\textbf{83.88} &82.75 \\
		&&I-``max'' PGDs & 22.63 & 36.00 & 25.75 &29.75 & 59.63 & \textbf{72.75} \\
		&&\specialcellleft{I-``max'' PGDs+GDKDE} & 0.63 & 3.13 & 0.25 & 19.13 &	\textbf{36.50} & 30.13\\
		
		\cmidrule{2-9}\morecmidrules\cmidrule{2-9} 
		&\multirow{4}{*}{\specialcell{Transfer\\ attacks}} &GDKDE & 6.95 &	48.63 &	42.13 &	\textbf{77.35} &	74.70 & 76.73 \\
		&&PGD-$\ell_1$ & 20.63 & 81.20 & 70.03 &  93.53 & \textbf{96.95} & 96.88 \\
		&&PGD-$\ell_\infty$ & 48.20 & 96.15 & 91.70 & 97.53 & \textbf{98.83} & 98.80  \\
		&&\specialcellleft{I-``max'' PGDs+GDKDE} & 3.45 & 89.08 & 75.55 & 85.55 & 93.93 & \textbf{97.15} \\
		\bottomrule
	\end{tabular}
	\label{tab:attack-res}
\end{table*}

\begin{figure*}[!htbp]
	\centering
	\begin{subfigure}[b]{0.92\textwidth}
		\includegraphics[width=\textwidth]{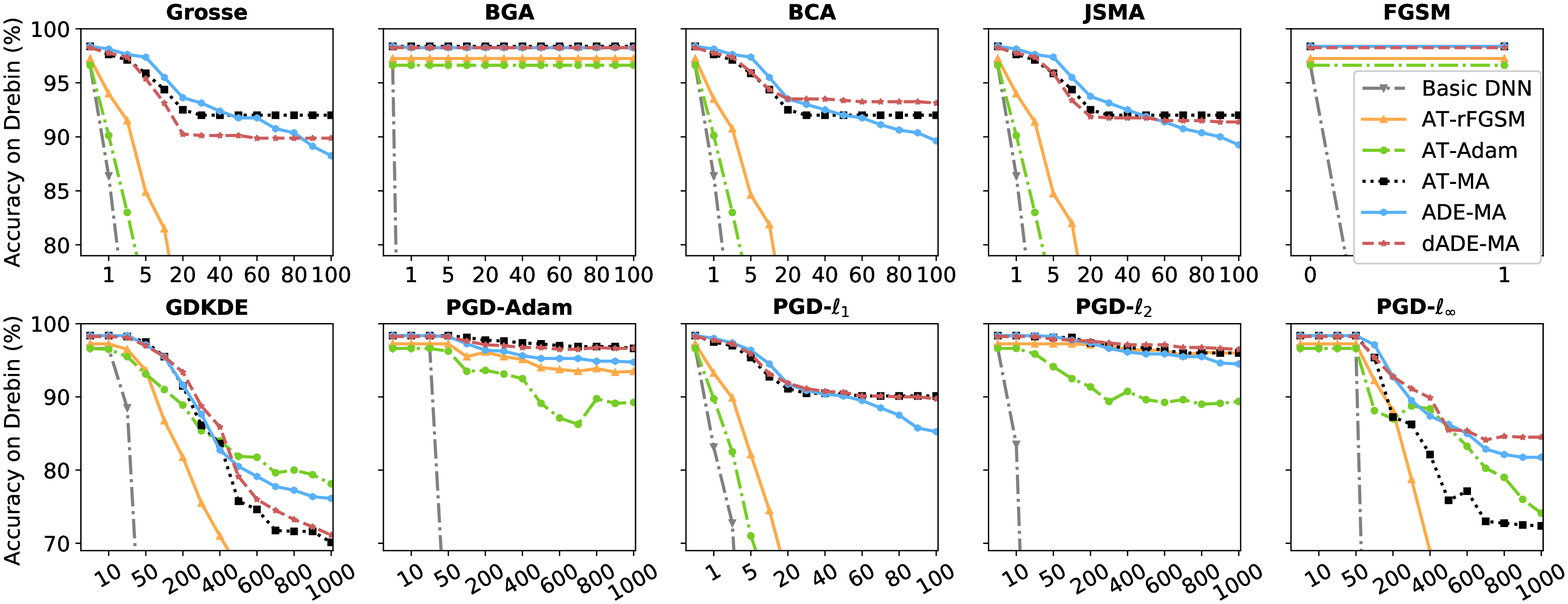}
		\vspace{-3.5ex}
	\end{subfigure}
	\begin{subfigure}[b]{0.92\textwidth}
		\includegraphics[width=\textwidth]{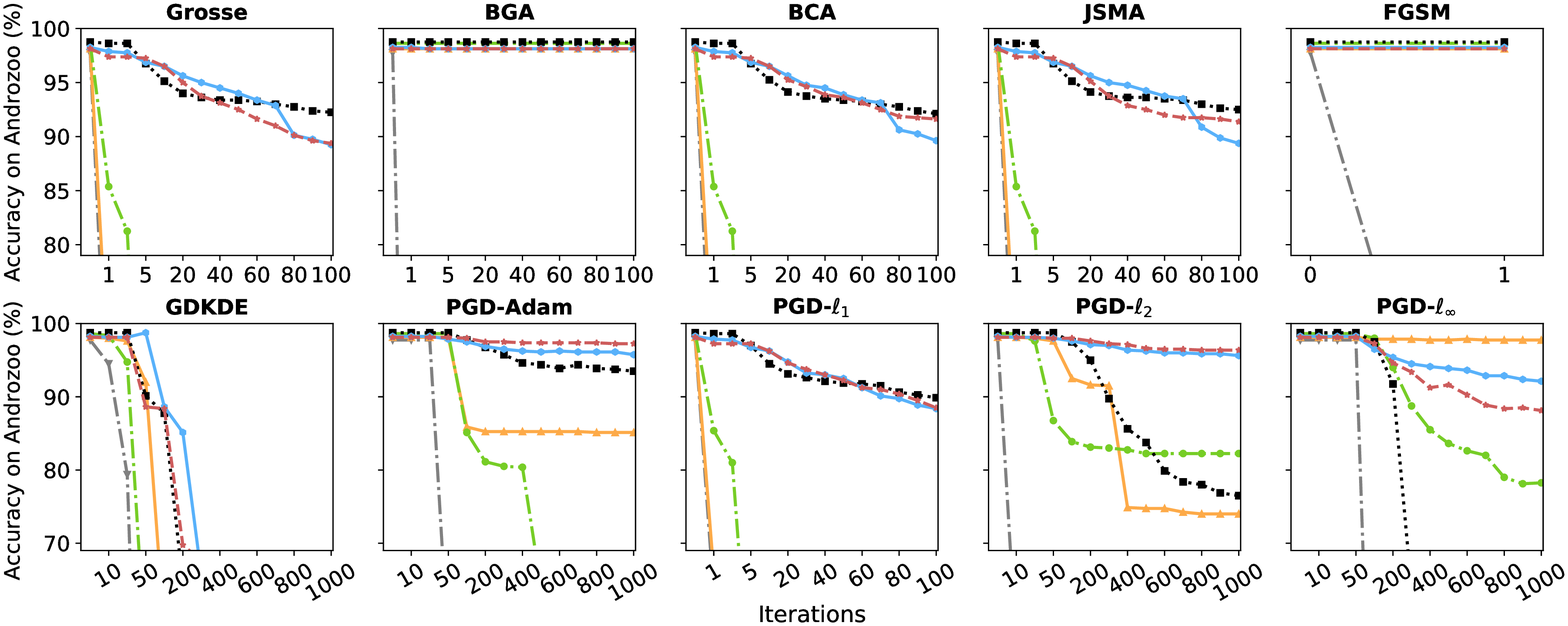}
	\end{subfigure}
	\caption{Accuracy of classifiers against gradient-based attacks with different iterations on Drebin and Androzoo datasets.}
	\label{fig:grad-attacks}
\end{figure*}

\noindent{\bf RQ2}: {\em How is the robustness of adversarial deep ensemble against a broad range of attacks, and how is the usefulness of ensemble against attacks?} For answering RQ2, we randomly select $800$ malware examples from the test set to wage evasion attacks, attempting to fool the aforementioned six classifiers, namely Basic DNN, AT-rFGSM, AT-Adam, AT-MA, ADE-MA, and dADE-MA. 

For gradient-based methods, in the settings of Grosse \cite{grosse2017adversarial}, BGA \cite{al2018adversarial}, BCA \cite{al2018adversarial}, JSMA \cite{8782574}, and $\ell_1$ norm-based PGD (dubbed PGD-$\ell_1$), we perturb one feature per time with the maximum iterations 100. For GDKDE \cite{Biggio:Evasion}, PGD-Adam \cite{li2019enhancing}, and $\ell_\infty$ norm-based PGD attack (dubbed PGD-$\ell_\infty$), we iterate the algorithm with the maximum iterations 1,000 and step size 0.01. The $\ell_2$ norm-based PGD attack (dubbed PGD-$\ell_2$) is set with the maximum iterations 1,000 and step size 1.0. 

For gradient-free attacks, we wage salt and pepper noises attack (dubbed Salt+Pepper) with $N_{rept}=10$, $\epsilon_{max}=1$, and $N_s=1,000$. Moreover, let Mimicry$\times N_{ben}$ denote a mimicry attack, in which we use $N_{ben}$ benign examples to guide the perturbation of a malware example, leading to $N_{ben}$ perturbed examples; then, we select the one from these $N_{ben}$ perturbed example that causes the miss-classification with the smallest perturbations. Pointwise takes the resultant examples of Mimicry$\times N_{ben}$ as input (dubbed  Pointwise$\times N_{ben}$).
 
The obfuscations are implemented via the AVPASS which is a tool to obfuscate Android applications \cite{jung:avpass-bh}. We apply five attacks: {\em Java} reflection, {\em string} encryption, {\em variable} renaming, junk code injection, and the four techniques above combined.

In order to wage mixture of attacks, we leverage four PGD attacks (PGD-adam, PGD-$\ell_1$, PGD-$\ell_2$, and PGD-$\ell_\infty$), thus denoted ``max'' attack as ``max'' PGDs. The iterative version is denoted as I-``max'' PGDs with iteration $N=5$ and $\varepsilon=10^{-9}$. Furthermore, by jointing the PGDs and GDKDE, we wage I-``max'' PGDs+GDKDE attack.

When performing transfer attacks for the targeted model, we treat the other five classifiers as surrogate models individually. This means that, given a malware example, we perturb it upon the other five models, respectively. All the perturbed examples will be sent to query the targeted model.

Table \ref{tab:attack-res} reports the accuracy of classifiers against attacks on Drebin and Androzoo datasets.
From the upper half of Table \ref{tab:attack-res}, we make the following observations. First, three defenses (AT-MA, ADE-MA, and dADE-MA) significantly enhance the robustness of DNNs, achieving the accuracy of $\geq90.13\%$ under 19 attacks, $\geq88.25\%$ under 18 attacks, and $\geq 89.75\%$ under 19 attacks, respectively. These achievements considerably outperform the Basic DNN, AT-rFGSM, and AT-Adam except for a lower accuracy (smaller than $8\%$) than AT-Adam under the GDKDE attack and a small lower ($<1.99\%$) than AT-rFGSM under an obfuscation attack. 

Second, ADE-MA and dADE-MA improve the robustness against gradient-free attacks, obfuscation attacks, and transfer attacks. When came to the gradient-based attacks and mixture of attacks, however, both defense models are not useful, and even hinder the robustness in some cases. For instance, compared to the AT-MA, neither of the two ensembles mitigate the I-``max'' PGDs effectively (a $26.88\%$ decrease for ADE-MA and $12.38\%$ decrease for dADE-MA).

Third, all defenses suffer from the GDKDE ($\leq 78.13\%$), PGD-$\ell_\infty$ ($\leq 84.5\%$), Mimicry$\times 30$ ($\leq 83\%$), Pointwise ($\leq 81.75\%$) and the three mixtures of attacks ($\leq 79.63\%$). For GDKDE and Mimicry, the reason may be that these generative methods produce adversarial representations that have similar data distribution as benign examples, leading that no learning-based classifiers can detect these attacks effectively. Pointwise further promotes the attack effectiveness when using Mimicry$\times 30$ as the initial attack. For PGD-$\ell_\infty$, the reason may be that the corresponding maximizer does not suffice to obtain adversarial examples in the training phase. This further leads to the effectiveness of ``max'' PGDs.

From the lower half of Table \ref{tab:attack-res}, we additionally make the following observations. Fourth, the effectiveness of gradient-based attacks is not comparative to the gradient-free attacks (e.g., Mimicry), which counters the results in Drebin dataset. This may be that the feature representations of malicious examples are closer to benign ones in the Androzoo dataset than that in Drebin.

Fifth, though AT-MA achieves higher accuracy of detecting most of gradient-based attacks, its robustness is lower than AT-Adam under the PGD-$\ell_2$ attack and also lower than AT-rFGSM under the PGD-$\ell_\infty$ attack. This can be explained that AT-MA induces a loss over the four attacks, which in turn leads to lower robustness than hardened models that focus on an attack solely. 

Figure \ref{fig:grad-attacks} depicts the accuracy of classifiers against gradient-based attacks with different iterations. With more details, we further make the observations as follows. Sixth, once the iterations exceed a certain extent, GDKDE can evade all hardened models effectively. Moreover, the hardened models, incorporating adversarial training with ``max'' PGDs, cannot thwart the PGD-$\ell_\infty$ attack as effectively as the PGD-$\ell_1$ attack. 

Seventh, ADE-MA tends to achieve better accuracy than the AT-MA when attacks are launched with small iterations, while this situation exchanges when iterations are increased to a large extent. It is worth noting that AT-MA attains the best accuracy on the unperturbed malware examples in the Androzoo dataset, resulting in the following phenomenon that, along with the iteration increased, AT-MA obtains higher accuracy than ADE-ME at the start of curves, while lower accuracy in the middle, and back to higher accuracy in the end. To some extent, this confirms our theoretical analysis of ensemble that promotes the robustness when base classifiers are robust enough.

Eighth, under most of the attacks such as Groose, BCA, JSMA and PGD $\ell_1$, the defense of dADE-MA serves as a remedy for ADE-MA against attacks at large iterations (about $>60$). This explains why dADE-MA circumvents more attacks than ADE-MA (see Table \ref{tab:attack-res}). In summary, we draw insights:

\begin{insight}
	The hardened models incorporating mixture of attacks can defend against a broad range of attacks effectively, but remaining vulnerable to mimicry attacks and mixtures of attacks; Ensemble promotes the robustness against an attack when base classifiers are robust enough.
\end{insight}

\noindent{\bf RQ3}: {\em How is the accuracy of anti-virus scanners under the mixture of attacks?} For answering RQ3, we wage transfer attacks to target the VirusTotal. The surrogate model is dADE-MA and the generative method is I-``max'' PGDs+GDKDE. We perturb the randomly selected 800 malware examples from Drebin test set. Apktool is leveraged to perform reverse engineering \cite{apktool:Online}. We finally obtain 800 perturbed malware examples, along with their unperturbed versions, which are together queried VirusTotal service. 

\begin{figure}[htb]
	\centering
	\begin{subfigure}[b]{0.19\textwidth}
		\includegraphics[width=\textwidth]{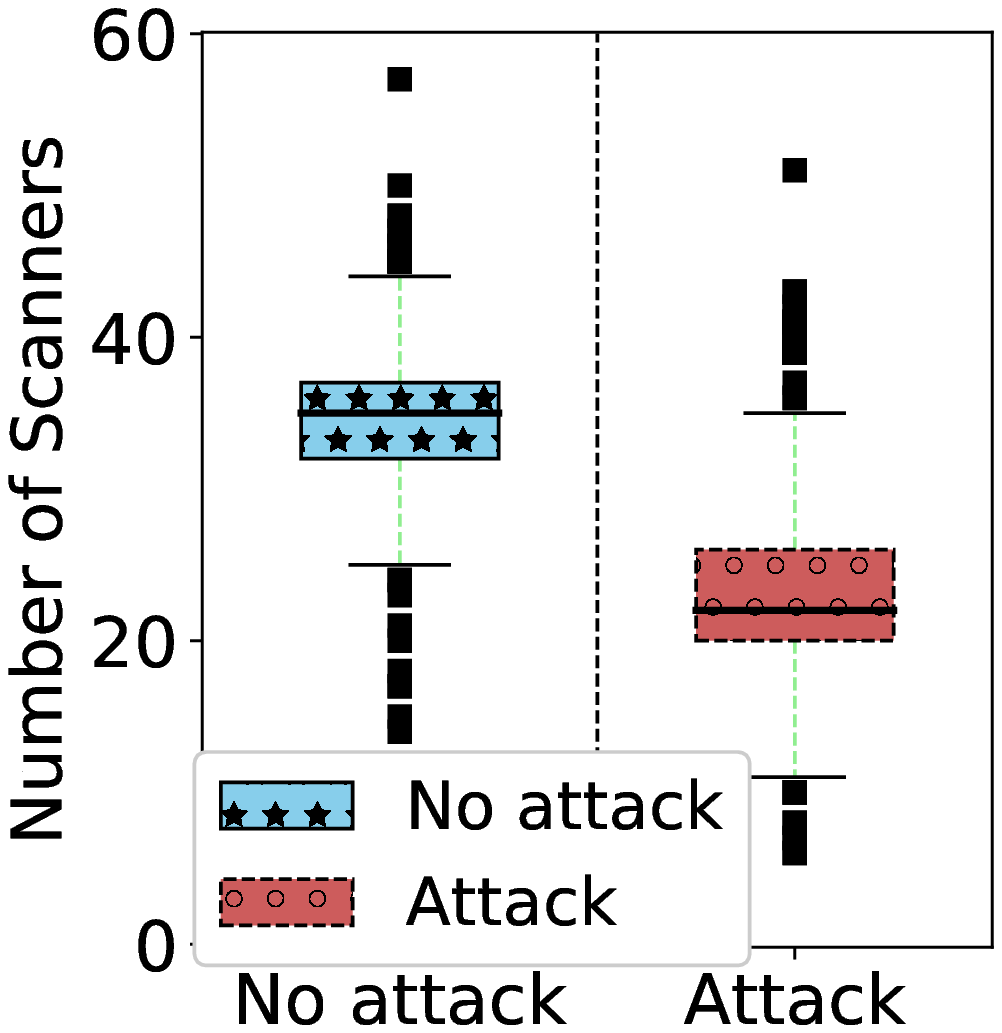}
		\caption{}
		\label{subfig:attack-vt01}
	\end{subfigure}
	\begin{subfigure}[b]{0.225\textwidth}
		\includegraphics[width=\textwidth]{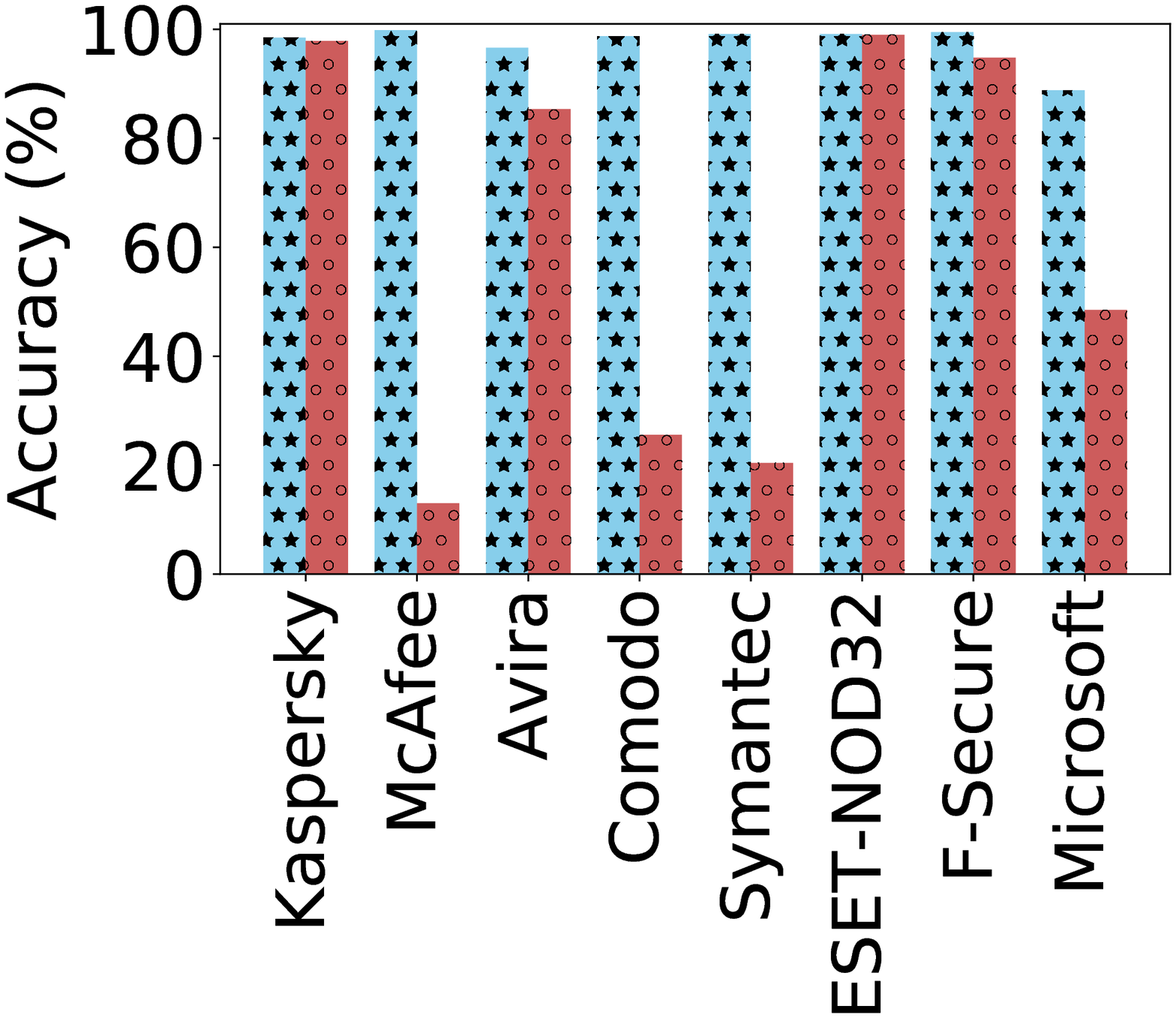}
		\caption{}
		\label{subfig:attack-vt02}
	\end{subfigure}
	\caption{Waging transfer attack against VirusTotal service. (a) Box plot of the number of anti-virus scanners that predict a queried example as malicious when applied 800 pristine malware examples (i.e., no attack) and their 800 perturbed ones (i.e., attack). We perturb the malware examples using I-``max'' PGDs+GDKDE attack against dADE-MA. (b) Accuracy of 8 scanners when applied the (un)perturbed examples.}
	\label{fig:vt}
\end{figure}

Figure \ref{fig:vt} exhibits the experimental results of attacking VirusTotal. From the box-plot of Figure \ref{subfig:attack-vt01}, we observe that malware examples are predicted as malicious confidently except that about $1\%$ of them ($\sim$8 files in 800 examples) being detected by below 24 scanners, while adversarial attacks notably affect the prediction, by noting that most of the perturbed examples are detected by lower than 25 scanners. Nonetheless, no attacks can evade VirusTotal thoroughly. Figure \ref{subfig:attack-vt02} showcases 8 famous scanners. We observe that the attack barely affects the Kaspersky, ESET-NOD32, and F-Secure. McAfee, Comodo, and Symantec, however, present vulnerability to these adversarial examples. In summary, we draw:

\begin{insight}
	The transfer version of mixture of attacks can downgrade the VirusTotal service and evade certain anti-virus scanners effectively.
\end{insight}

\begin{table*}[htbp!]
	\caption{ Top 20 important features for the defense of dADE-MA vs. I-``max'' PGDs+GDKDE attack. We further report whether the feature facilitates the classification accuracy for benign ($-$) or malicious ($+$) examples and, whether the feature is frequently manipulated by flipping `1' to `0' ($\downarrow$) or flipping `0' to `1' ($\uparrow$).
	}
	\centering
	\scalebox{0.9}{
		\begin{tabular}{lc|lc}
			\toprule
			{Defender}&$+/-$ &{Attacker}& $\uparrow/\downarrow$ \\
			\midrule
			android/widget/VideoView;$\rightarrow$start & $-$ & com.airpush.android.PushServiceStart61159 & $\uparrow$ \\
			
			android/widget/MediaPlayer;$\rightarrow$start & $-$ & .httpserver.HttpServerService & $\uparrow$ \\
			
			android.permission.ACCESS\_COARSE\_LOCATION & $-$ & android/telephony/TelephonyManager;$\rightarrow$getLine1Numb & $\downarrow$ \\
			
			android.permission.INTERNET& $-$ & android/telephony/TelephonyManager;$\rightarrow$listen & $\uparrow$ \\
			
			com.google.ads.AdActivity & $-$ & android/location/LocationManager;$\rightarrow$removeUpdates & $\uparrow$ \\
			
			\cmidrule{0-3}
			android.permission.ACCESS\_FINE\_LOCATION & $-$ & android/location/LocationManager;$\rightarrow$requestLocationUpdates & $\uparrow$ \\
			
			android/location/LocationManager;$\rightarrow$requestLocationUpdates & $-$ & android/location/LocationManager;$\rightarrow$getLastKnownLocation & $\uparrow$ \\
			
			android/location/LocationManager;$\rightarrow$removeUpdates & $-$ & android/net/ConnectivityManager;$\rightarrow$getNetworkInfo & $\downarrow$ \\
			
			android/net/ConnectivityManager;$\rightarrow$getActiveNetworkInfo & $-$ & android.permission.ACCESS\_FINE\_LOCATION & $\uparrow$ \\
			
			getSystemService & $-$ & android.permission.READ\_CONTACTS & $\uparrow$ \\
			
			\cmidrule{0-3}
			android.permission.ACCESS\_NETWORK\_STATE & $-$ & android/app/ActivityManager;$\rightarrow$getRunningTasks & $\uparrow$ \\
			
			android/location/LocationManager;$\rightarrow$getBestProvider & $-$ & android.permission.ACCESS\_COARSE\_LOCATION & $\uparrow$ \\
			
			android.permission.READ\_CONTACTS & $-$ & android/telephony/TelephonyManager;$\rightarrow$getDeviceId & $\downarrow$ \\
			
			android/net/ConnectivityManager;$\rightarrow$getNetworkInfo &$+$& android.permission.MOUNT\_UNMOUNT\_FILESYSTEMS & $\uparrow$ \\
			
			getPackageInfo & $-$ & android/telephony/TelephonyManager;$\rightarrow$getCellLocation & $\uparrow$ \\
			
			\cmidrule{0-3}
			android/widget/VideoView;$\rightarrow$setVideoURI & $-$ & getPackageInfo & $\uparrow$ \\
			
			android.permission.WAKE\_LOCK & $-$ & android/net/ConnectivityManager;$\rightarrow$getActiveNetworkInfo & $\uparrow$ \\
			
			android.permission.SEND\_SMS & $+$ & android/net/wifi/WifiManager;$\rightarrow$isWifiEnabled & $\uparrow$ \\
			
			android/widget/VideoView;$\rightarrow$setVideoPath & $-$ & sendTextMessage & $\downarrow$ \\
			
			printStackTrace & $-$ & android.permission.WRITE\_SETTINGS & $\uparrow$ \\
			
			\bottomrule
		\end{tabular}
	}
	\label{tab:def-attack-feature}
\end{table*}

\noindent{\bf RQ4}: {\em Why the enhanced classifiers can (cannot) defend against certain attacks?} For answering RQ4, we statistically attribute ``important'' features for the defender and the attacker, respectively. Here the ``important'' feature is the one that has a large influence on the classification accuracy. To this purpose, we present a case study on the defense dADE-MA and the attack I-``max'' PGDs+GDKDE on the Drebin dataset. For the attacker, we intuitively investigate the features perturbed by the adversary with high-frequencies. For the defender, an intuitive solution is deficient due to the intricate structure of deep ensemble. Therefore, we introduce an alternative using feature selection technique via a surrogate model. The procedure proceeds as follows: (i) train a surrogate model on the training set to mimic dADE-MA; (ii) rank the features based on feature importance extracted from the surrogate model; (iii) mask feature representations based on the ranking result by setting non-important features value as 0; (iv) send the masked feature representations to dADE-MA and calculate the change of accuracy; (v) repeat step (iii)-(iv) and terminate it until a predetermined number of important features is reached or the accuracy is lower than a threshold; (vi) refine the importance upon the retained features using {\em permutation importance} \cite{breiman2001random}. We use {\em random forest} to learn the surrogate model, aiming to obtain feature importance coarsely. In step (iii)-(v), we leverage binary search to speed up filtering out trivial features.

Table \ref{tab:def-attack-feature} demonstrates the top 20 important features for the classifier dADE-MA and the attack I-``max'' PGDs+GDKDE, respectively. We make the following observations. First, 18 features (90\%) benefit the detection of benign examples, which may be induced by the imbalanced dataset. Second, 8 important features (40\%) of the classifier are manipulated by I-``max'' PGDs+GDKDE frequently, in particular the feature {\small {\tt getNetworkInfo}}, which promotes malware detection. Third, features of the classifier that rank at top are neglected by this attack. Two reasons could account for this: the attack is failed to search for some of these features; both malicious and benign applications use these features frequently, such as {\small {\tt android.permission.INTERNET}} for accessing internet service. Finally, {\small {\tt com.google.ads.AdActivity}} is at the 5th place. Though counter-intuitive, it may be that adversarial deep ensemble enforces the model to focus on this neutral feature. The above observations collaboratively explain why dADE-MA obtains an accuracy of 51\% against I-``max'' PGDs+GDKDE attack and sacrifices detection accuracy in the absence of attacks.

\begin{insight}
	 The hardened model is prone to use sub-effective features, so as to gain robustness against attacks but a little trading off accuracy in absence of attacks.
\end{insight} 

\section{Discussion} \label{sec:discussion}

\noindent{\bf Functionality Estimation}. We emulate malware behaviors by Cuckoodroid \cite{cuckoodroid:ref} that is an automated static and dynamical analysis toolkit for APKs. Due to the efficiency issue, we randomly select 10 APKs from 800 perturbed examples that are sent to VirusTotal, along with their original versions, to conduct the estimation. We observe that two malware examples and their perturbed versions cannot run on the emulator, and thus exclude them from the testing. For other perturbed applications, 3 of them execute successfully on the emulator, 2 apps can be deployed into Android runtime but failed to run, and the remained 3 apps cannot be installed. This shows more research is needed to solve the problem of retaining malicious functionality. 

\noindent{\bf Small vs. large degree of manipulations}. In this study, we let attackers suffice iterations to maximize the classifier's loss when perturbing malware examples as long as the malicious functionality is preserved. In term of $\ell_1$ norm, some attacks perturb malware examples slightly, for example the JSMA attack against the Basic DNN with the average perturbations 4.48; some others perturb malware examples to a large extent, for example the PGD-$\ell_\infty$ attack against the Basic DNN with the average perturbations 2,772.87. In addition, the I-``max'' PGDs+GDKDE attack perturbs malware examples with the average perturbations 2,600.02, 53.23, and 45.16 when targeting the Basic DNN, AT-Adam, and dADE-MA, respectively.

\noindent{\bf Validating the hypothesis of theoretical analysis}. We empirically justify the hypothesis (see Section \ref{sec:analysis}) that base classifiers of ensemble are non-negatively correlated under adversarial example attacks. In this end, we directly let perturbed examples pass through a deep ensemble and measure the {\em Pearson correlation coefficient} between any two base models upon the logit belonging to the label `1'. The ideal classifier is treated as a constant and thus neglected by the applied measurement. We perturb the 800 malware examples selected from Drebin datasets using Mimicry$\times$30 and I-``max'' PGDs+GDKDE attacks against ADE-MA. There are 5 base models in ADE-MA, each attack resulting in 10 correlation coefficients wherein the mean value is $0.4\pm0.16 $ (0.16 is the standard deviation) under Mimicry attack and $0.18\pm0.22 $ under I-``max'' PGDs+GDKDE. Moreover, we conduct the same estimation for dADE-MA. The results are $0.56\pm0.15$ under Mimicry attack and $0.39\pm 0.17$ under I-``max'' PGDs+GDKDE. Most of the observations confirm our statement except for several cases in ADE-MA.

\section{Conclusion and Future Work} \label{sec:conclusion}

We have studied the usefulness of ensemble for both the defender and the attacker in the context of adversarial malware detection. We propose the mixture of attacks and adversarial deep ensemble. The adversarial deep ensemble can defend against a broad range of evasion attacks, while cannot thwart mimicry attacks and mixtures of attacks. Ensemble methods promote the robustness against evasion attacks when base classifiers are robust enough. For the attacker, the ensemble methods notably improve the attack effectiveness. 

We hope this paper will inspire more research into the context of adversarial malware detection. Future research problems are plentiful, such as: intriguing properties of different adversarial malware examples, seeking effective attacks, robust feature extraction, malicious functionality estimation, defense validation metrics and further designing robust defenses.

\bibliographystyle{IEEEtran} 
\bibliography{malware_paper}

\begin{thebibliography}{10}
\providecommand{\url}[1]{#1}
\csname url@samestyle\endcsname
\providecommand{\newblock}{\relax}
\providecommand{\bibinfo}[2]{#2}
\providecommand{\BIBentrySTDinterwordspacing}{\spaceskip=0pt\relax}
\providecommand{\BIBentryALTinterwordstretchfactor}{4}
\providecommand{\BIBentryALTinterwordspacing}{\spaceskip=\fontdimen2\font plus
\BIBentryALTinterwordstretchfactor\fontdimen3\font minus
  \fontdimen4\font\relax}
\providecommand{\BIBforeignlanguage}[2]{{%
\expandafter\ifx\csname l@#1\endcsname\relax
\typeout{** WARNING: IEEEtran.bst: No hyphenation pattern has been}%
\typeout{** loaded for the language `#1'. Using the pattern for}%
\typeout{** the default language instead.}%
\else
\language=\csname l@#1\endcsname
\fi
#2}}
\providecommand{\BIBdecl}{\relax}
\BIBdecl

\bibitem{symantec:techrepstdsub}
Symantec, ``Internet security threat report 2019 (istr),'' Symantec, Tech.
  Rep., February 2019.

\bibitem{kasparsky:Online}
\BIBentryALTinterwordspacing
V.~Chebyshev. (2019, March 5) Mobile malware evolution. [Online]. Available:
  \url{https://securelist.com/}
\BIBentrySTDinterwordspacing

\bibitem{cisco:Online}
\BIBentryALTinterwordspacing
CISCO. (2018) Cisio {@ONLINE}. [Online]. Available: \url{https://www.cisco.com}
\BIBentrySTDinterwordspacing

\bibitem{DBLP:journals/csur/YeLAI17}
Y.~Ye, T.~Li, and et~al., ``A survey on malware detection using data mining
  techniques,'' \emph{{ACM} Comput. Surv.}, vol.~50, no.~3, pp. 41:1--41:40,
  2017.

\bibitem{DBLP:conf/kdd/FanHZYA18}
Y.~Fan, S.~Hou, and et~al., ``Gotcha - sly malware!: Scorpion {A} metagraph2vec
  based malware detection system,'' in \emph{Proceedings of KDD'2018}, 2018,
  pp. 253--262.

\bibitem{Biggio:Evasion}
I.~C. B.~Biggio and D.~M. et~al., ``Evasion attacks against machine learning at
  test time,'' in \emph{Machine Learning and Knowledge Discovery in Databases:
  European Conference}.\hskip 1em plus 0.5em minus 0.4em\relax Springer, 01
  2013, pp. 387--402.

\bibitem{al2018adversarial}
A.~Al-Dujaili, A.~Huang, E.~Hemberg, and U.-M. O’Reilly, ``Adversarial deep
  learning for robust detection of binary encoded malware,'' in \emph{2018 IEEE
  Security and Privacy Workshops (SPW)}.\hskip 1em plus 0.5em minus 0.4em\relax
  IEEE, 2018, pp. 76--82.

\bibitem{DBLP:conf/ijcai/HouYSA18}
S.~Hou, Y.~Ye, Y.~Song, and M.~Abdulhayoglu, ``Make evasion harder: An
  intelligent android malware detection system,'' in \emph{Proceedings of the
  Twenty-Seventh IJCAI}, 2018, pp. 5279--5283.

\bibitem{DBLP:conf/eisic/ChenYB17}
L.~Chen, Y.~Ye, and T.~Bourlai, ``Adversarial machine learning in malware
  detection: Arms race between evasion attack and defense,'' in
  \emph{EISIC'2017}, 2017, pp. 99--106.

\bibitem{kreuk2018deceiving}
F.~Kreuk, A.~Barak, S.~Aviv-Reuven, M.~Baruch, B.~Pinkas, and J.~Keshet,
  ``Deceiving end-to-end deep learning malware detectors using adversarial
  examples,'' in \emph{CoRR}, 2018.

\bibitem{grossePM0M16}
K.~Grosse, N.~Papernot, P.~Manoharan, M.~Backes, and P.~McDaniel, ``Adversarial
  perturbations against deep neural networks for malware classification,''
  \emph{arXiv preprint arXiv:1606.04435}, 2016.

\bibitem{grefenstette2018strength}
E.~Grefenstette, R.~Stanforth, B.~O'Donoghue, J.~Uesato, G.~Swirszcz, and
  P.~Kohli, ``Strength in numbers: Trading-off robustness and computation via
  adversarially-trained ensembles,'' \emph{arXiv preprint arXiv:1811.09300},
  2018.

\bibitem{tramer2017ensemble}
F.~Tram{\`e}r, A.~Kurakin, N.~Papernot, I.~Goodfellow, D.~Boneh, and
  P.~McDaniel, ``Ensemble adversarial training: Attacks and defenses,''
  \emph{arXiv preprint arXiv:1705.07204}, 2017.

\bibitem{liu_2016}
Y.~Liu, X.~Chen, C.~Liu, and D.~Song, ``Delving into transferable adversarial
  examples and black-box attacks,'' \emph{arXiv preprint arXiv:1611.02770},
  2016.

\bibitem{article_Kwon}
H.~Kwon, Y.~KIM, K.-W. Park, H.~Yoon, and D.~Choi, ``Advanced ensemble
  adversarial example on unknown deep neural network classifiers,'' \emph{IEICE
  Transactions on Information and Systems}, vol. E101.D, pp. 2485--2500, 10
  2018.

\bibitem{tramer2019adversarial}
F.~Tram{\`e}r and D.~Boneh, ``Adversarial training and robustness for multiple
  perturbations,'' \emph{arXiv preprint arXiv:1904.13000}, 2019.

\bibitem{araujo2019robust}
A.~Araujo, R.~Pinot, and et~al., ``Robust neural networks using randomized
  adversarial training,'' \emph{arXiv preprint arXiv:1903.10219}, 2019.

\bibitem{schott2018towards}
L.~Schott, J.~Rauber, M.~Bethge, and W.~Brendel, ``Towards the first
  adversarially robust neural network model on mnist,'' \emph{arXiv preprint
  arXiv:1805.09190}, 2018.

\bibitem{DBLP:journals/corr/abs-1802-00420}
A.~Athalye, N.~Carlini, and D.~A. Wagner, ``Obfuscated gradients give a false
  sense of security: Circumventing defenses to adversarial examples,''
  \emph{CoRR}, vol. abs/1802.00420, 2018.

\bibitem{li2020enhancing}
D.~Li, Q.~Li, Y.~Ye, and S.~Xu, ``Enhancing deep neural networks against
  adversarial malware examples,'' \emph{arXiv preprint arXiv:2004.07919}, 2020.

\bibitem{madry2017towards}
A.~Madry, A.~Makelov, L.~Schmidt, D.~Tsipras, and A.~Vladu, ``Towards deep
  learning models resistant to adversarial attacks,'' \emph{arXiv preprint
  arXiv:1706.06083}, 2017.

\bibitem{grosse2017adversarial}
K.~Grosse, N.~Papernot, and et~al., ``Adversarial examples for malware
  detection,'' in \emph{European Symposium on Research in Computer
  Security}.\hskip 1em plus 0.5em minus 0.4em\relax Springer, 2017, pp. 62--79.

\bibitem{li2019enhancing}
D.~Li, Q.~Li, Y.~Ye, and S.~Xu, ``Enhancing robustness of deep neural networks
  against adversarial malware samples: Principles, framework, and application
  to aics’2019 challenge,'' in \emph{The AAAI-19 Workshop on Artificial
  Intelligence for Cyber Security (AICS), 2019}, 2019.

\bibitem{goodfellow6572explaining}
I.~J. Goodfellow, J.~Shlens, and C.~Szegedy, ``Explaining and harnessing
  adversarial examples (2014),'' \emph{arXiv preprint arXiv:1412.6572}.

\bibitem{8782574}
X.~{Chen}, C.~{Li}, D.~{Wang}, S.~{Wen}, J.~{Zhang}, S.~{Nepal}, Y.~{Xiang},
  and K.~{Ren}, ``Android hiv: A study of repackaging malware for evading
  machine-learning detection,'' \emph{IEEE Transactions on Information
  Forensics and Security}, vol.~15, pp. 987--1001, 2020.

\bibitem{Daniel:NDSS}
D.~Arp, M.~Spreitzenbarth, and et~al., ``Drebin: Effective and explainable
  detection of android malware in your pocket.'' in \emph{Ndss}, vol.~14, 2014,
  pp. 23--26.

\bibitem{Allix:2016:ACM:2901739.2903508}
\BIBentryALTinterwordspacing
K.~Allix, T.~F. Bissyand{\'e}, J.~Klein, and Y.~Le~Traon, ``Androzoo:
  Collecting millions of android apps for the research community,'' in
  \emph{Proceedings of the 13th International Conference on Mining Software
  Repositories}, ser. MSR '16.\hskip 1em plus 0.5em minus 0.4em\relax New York,
  NY, USA: ACM, 2016, pp. 468--471. [Online]. Available:
  \url{http://doi.acm.org/10.1145/2901739.2903508}
\BIBentrySTDinterwordspacing

\bibitem{VirusTotal:Online}
\BIBentryALTinterwordspacing
(2019, May) Virustotal. [Online]. Available: \url{https://www.virustotal.com}
\BIBentrySTDinterwordspacing

\bibitem{dong2018boosting}
Y.~Dong, F.~Liao, and et~al., ``Boosting adversarial attacks with momentum,''
  in \emph{Proceedings of the CVPR}, 2018, pp. 9185--9193.

\bibitem{smutz_2012}
C.~Smutz and A.~Stavrou, ``Malicious pdf detection using metadata and
  structural features,'' in \emph{Proceedings of the 28th annual computer
  security applications conference}.\hskip 1em plus 0.5em minus 0.4em\relax
  ACM, 2012, pp. 239--248.

\bibitem{biggio2010multiple}
B.~Biggio, G.~Fumera, and F.~Roli, ``Multiple classifier systems for robust
  classifier design in adversarial environments,'' \emph{International Journal
  of Machine Learning and Cybernetics}, vol.~1, no. 1-4, pp. 27--41, 2010.

\bibitem{Biggio2010}
------, ``Multiple classifier systems under attack,'' in \emph{International
  Workshop on Multiple Classifier Systems}.\hskip 1em plus 0.5em minus
  0.4em\relax Springer, 2010, pp. 74--83.

\bibitem{abbasi2017robustness}
M.~Abbasi and C.~Gagn{\'e}, ``Robustness to adversarial examples through an
  ensemble of specialists,'' in \emph{ICLR 2017 workshop}, 2017.

\bibitem{xu2017feature}
W.~Xu, D.~Evans, and Y.~Qi, ``Feature squeezing: Detecting adversarial examples
  in deep neural networks,'' \emph{arXiv preprint:1704.01155}, 2017.

\bibitem{He2017AdversarialED}
W.~He, J.~Wei, X.~Chen, N.~Carlini, and D.~Song, ``Adversarial example defense:
  Ensembles of weak defenses are not strong,'' in \emph{11th {USENIX} Workshop
  on Offensive Technologies ({WOOT} 17)}.\hskip 1em plus 0.5em minus
  0.4em\relax Vancouver, BC: {USENIX} Association, Aug. 2017.

\bibitem{pang2019improving}
T.~Pang, K.~Xu, C.~Du, N.~Chen, and J.~Zhu, ``Improving adversarial robustness
  via promoting ensemble diversity,'' in \emph{International Conference on
  Machine Learning}, 2019, pp. 4970--4979.

\bibitem{smutz2016tree}
C.~Smutz and A.~Stavrou, ``When a tree falls: Using diversity in ensemble
  classifiers to identify evasion in malware detectors.'' in \emph{NDSS}, 2016.

\bibitem{article_stokes}
J.~W.~Stokes, D.~Wang, M.~Marinescu, M.~Marino, and B.~Bussone, ``Attack and
  defense of dynamic analysis-based, adversarial neural malware classification
  models,'' 12 2017.

\bibitem{Jugalstacking}
\BIBentryALTinterwordspacing
J.~Parikh. (2018, August) Protecting the protector:hardening machine learning
  defenses against adversarial attacks. [Online]. Available:
  \url{https://www.blackhat.com/us-18/speakers/Jugal-Parikh.html}
\BIBentrySTDinterwordspacing

\bibitem{papernotMGJCS16}
N.~Papernot, P.~McDaniel, I.~Goodfellow, S.~Jha, Z.~B. Celik, and A.~Swami,
  ``Practical black-box attacks against deep learning systems using adversarial
  examples,'' \emph{arXiv preprint}, 2016.

\bibitem{biggio2018wild}
B.~Biggio and F.~Roli, ``Wild patterns: Ten years after the rise of adversarial
  machine learning,'' \emph{Pattern Recognition}, vol.~84, pp. 317--331, 2018.

\bibitem{316904628}
W.~Xu, Y.~Qi, and D.~Evans, ``Automatically evading classifiers: A case study
  on pdf malware classifiers,'' in \emph{NDSS}, January 2016.

\bibitem{rndic_laskov}
P.~L. Nedim~rndic, ``Practical evasion of a learning-based classifier: A case
  study,'' in \emph{Security and Privacy (SP), 2014 IEEE Symposium on}.\hskip
  1em plus 0.5em minus 0.4em\relax IEEE, 2014, pp. 197--211.

\bibitem{demontis2018intriguing}
A.~Demontis, M.~Melis, and et~al., ``Why do adversarial attacks transfer?
  explaining transferability of evasion and poisoning attacks,'' in \emph{28th
  {USENIX} Security Symposium, {USENIX} Security 2019, Santa Clara, CA, USA,
  August 14-16, 2019}, N.~Heninger and P.~Traynor, Eds.\hskip 1em plus 0.5em
  minus 0.4em\relax {USENIX} Association, 2019, pp. 321--338.

\bibitem{7917369}
A.~{Demontis}, M.~{Melis}, and et~al., ``Yes, machine learning can be more
  secure! a case study on android malware detection,'' \emph{IEEE Transactions
  on Dependable and Secure Computing}, vol.~16, no.~4, pp. 711--724, July 2019.

\bibitem{pierazzi2019intriguing}
F.~Pierazzi, F.~Pendlebury, J.~Cortellazzi, and L.~Cavallaro, ``Intriguing
  properties of adversarial ml attacks in the problem space,'' \emph{arXiv
  preprint arXiv:1911.02142}, 2019.

\bibitem{biggio13-tkde}
B.~Biggio, G.~Fumera, and F.~Roli, ``Security evaluation of pattern classifiers
  under attack,'' \emph{IEEE TKDE}, vol.~26, pp. 984--996, 04 2014.

\bibitem{papernot2016limitations}
N.~Papernot, P.~McDaniel, and et~al., ``The limitations of deep learning in
  adversarial settings,'' in \emph{2016 EuroS\&P}.\hskip 1em plus 0.5em minus
  0.4em\relax IEEE, 2016, pp. 372--387.

\bibitem{jung:avpass-bh}
J.~Jung, C.~Jeon, and et~al., ``{AVPASS: Leaking and Bypassing Antivirus
  Detection Model Automatically},'' in \emph{Black Hat USA Briefings (Black Hat
  USA)}, Las Vegas, NV, Jul. 2017.

\bibitem{xu2014evasion}
L.~Xu, Z.~Zhan, S.~Xu, and K.~Ye, ``An evasion and counter-evasion study in
  malicious websites detection,'' in \emph{CNS, 2014 IEEE Conference on}.\hskip
  1em plus 0.5em minus 0.4em\relax IEEE, 2014, pp. 265--273.

\bibitem{DBLP:journals/corr/KurakinGB16a}
A.~Kurakin, I.~Goodfellow, and S.~Bengio, ``Adversarial machine learning at
  scale,'' \emph{arXiv preprint arXiv:1611.01236}, 2016.

\bibitem{DBLP:conf/codaspy/XuZXY13}
L.~Xu, Z.~Zhan, S.~Xu, and K.~Ye, ``Cross-layer detection of malicious
  websites,'' in \emph{Third {ACM} Conference on Data and Application Security
  and Privacy (CODASPY'13)}, 2013, pp. 141--152.

\bibitem{doi:10.1080/01621459.1963.10500830}
W.~Hoeffding, ``Probability inequalities for sums of bounded random
  variables,'' \emph{Journal of the American statistical association}, vol.~58,
  no. 301, pp. 13--30, 1963.

\bibitem{zhou2012ensemble}
Z.-H. Zhou, \emph{Ensemble methods: foundations and algorithms}.\hskip 1em plus
  0.5em minus 0.4em\relax Chapman and Hall/CRC, 2012.

\bibitem{bertsekas2014constrained}
D.~P. Bertsekas, \emph{Constrained optimization and Lagrange multiplier
  methods}.\hskip 1em plus 0.5em minus 0.4em\relax Academic press, 2014.

\bibitem{duchi2008efficient}
J.~Duchi, S.~Shalev-Shwartz, and et~al., ``Efficient projections onto the l
  1-ball for learning in high dimensions,'' in \emph{Proceedings of the 25th
  ICML}.\hskip 1em plus 0.5em minus 0.4em\relax ACM, 2008, pp. 272--279.

\bibitem{Androguard:Online}
\BIBentryALTinterwordspacing
A.~Desnos. (2019) Androguard {@ONLINE}. [Online]. Available:
  \url{https://github.com/androguard/androguard}
\BIBentrySTDinterwordspacing

\bibitem{5597285}
K.~H. {Brodersen}, C.~S. {Ong}, K.~E. {Stephan}, and J.~M. {Buhmann}, ``The
  balanced accuracy and its posterior distribution,'' in \emph{2010 20th
  International Conference on Pattern Recognition}, 2010, pp. 3121--3124.

\bibitem{Pendleton:2016}
M.~Pendleton, R.~Garcia-Lebron, J.-H. Cho, and S.~Xu, ``A survey on systems
  security metrics,'' \emph{ACM Comput. Surv.}, vol.~49, no.~4, pp. 1--35, Dec.
  2016.

\bibitem{apktool:Online}
\BIBentryALTinterwordspacing
(2019, May) Apktool. [Online]. Available:
  \url{https://ibotpeaches.github.io/Apktool}
\BIBentrySTDinterwordspacing

\bibitem{breiman2001random}
L.~Breiman, ``Random forests,'' \emph{Machine learning}, vol.~45, no.~1, pp.
  5--32, 2001.

\bibitem{cuckoodroid:ref}
I.~Revivo and O.~Caspi, ``Cuckoodroid,'' in \emph{Black Hat USA}, Las Vegas,
  NV, Jul. 2017.

\bibitem{kolosnjaji2018adversarial}
B.~{Kolosnjaji}, A.~{Demontis}, and et~al., ``Adversarial malware binaries:
  Evading deep learning for malware detection in executables,'' in \emph{2018
  26th European Signal Processing Conference (EUSIPCO)}, Sep. 2018, pp.
  533--537.

\end{thebibliography}

\end{document}